\documentclass[11pt]{article}
\usepackage{fullpage}
\usepackage{amsthm,amssymb,amsmath}  
\usepackage{xspace,enumerate}
\usepackage[utf8]{inputenc}
\usepackage{thmtools}
\usepackage{thm-restate}
\usepackage{authblk}

  \theoremstyle{plain}
  \newtheorem{theorem}{Theorem}
  \newtheorem{lemma}{Lemma}  
   
  \newtheorem{proposition}[theorem]{Proposition}

  \theoremstyle{definition}
  \newtheorem{definition}{Definition}

  \newtheorem{remark}[definition]{Remark}
 
  \newtheorem*{claim}{Claim}
  \newcommand{\sample}{\mathcal{S}}


  \usepackage{verbatim}
  \usepackage{amsmath,amsfonts}
  \usepackage{tikz}
  \usetikzlibrary{trees}
  \usepackage{forest}  
  \usepackage[T1]{fontenc}
  \usepackage{lmodern}
  \usepackage[utf8]{inputenc}
  \usepackage{xspace}
  \usepackage[noend, boxed, noline]{algorithm2e}
  \usepackage[colorinlistoftodos]{todonotes}
  \usepackage[colorlinks=true, allcolors=blue]{hyperref}
  \usepackage[capitalise]{cleveref}
  \usepackage{subcaption}
  \usepackage{enumerate}
  \usepackage{microtype}
  \usepackage[]{setspace}
  \SetAlCapSkip{0.5em}
  \usetikzlibrary{arrows.meta}
  \usetikzlibrary{patterns}
  \pagestyle{plain}
   
  \newcommand{\W}{\mathbf{W}}
  \newcommand{\defproblem}[3]{
  \vspace{2mm}
  \noindent\fbox{
  \begin{minipage}{0.96\textwidth}
  \textsc{#1}

  \smallskip
  \noindent
  {\bf{Input:}} #2
  
  \smallskip
  \noindent
  {\bf{Output:}} #3
  \end{minipage}
  }
  \vspace{2mm}
}

  \def\dd{\mathinner{.\,.}}
  
  \newcommand{\floor}[1]{\left\lfloor #1 \right\rfloor}
  \newcommand{\ceil}[1]{\left\lceil #1 \right\rceil}
  \newcommand{\per}{\textsf{per}}
  \newcommand{\Oh}{\mathcal{O}}
  \newcommand{\cO}{\mathcal{O}}
  \newcommand{\G}{\mathcal{G}}
  \def\dd{\mathinner{.\,.}}

  \newcommand{\rot}{\mathsf{rot}}
  \newcommand{\Chain}{\mathsf{Chain}}
  \newcommand{\lcp}{\mathsf{lcp}}
  \newcommand{\LeftMis}{\mathsf{LeftMisper}}
  \newcommand{\RightMis}{\mathsf{RightMisper}}
  \newcommand{\Misp}{\mathsf{Misper}}
  \newcommand{\Mis}{\mathsf{Mis}}
  \newcommand{\lcsuf}{\mathsf{lcs}}
  \newcommand{\frag}{\mathsf{frag}}

\title{Circular Pattern Matching with $k$ Mismatches}

\author[1]{Panagiotis Charalampopoulos}
\author[2,3]{Tomasz Kociumaka}
\author[4]{Solon P. Pissis}
\author[3]{Jakub Radoszewski}
\author[3]{Wojciech Rytter}
\author[3]{Juliusz Straszyński}
\author[3]{Tomasz Waleń}
\author[3]{Wiktor Zuba}

\affil[1]{Department of Informatics, King's College London, UK\\
    \texttt{panagiotis.charalampopoulos@kcl.ac.uk}}
\affil[2]{Department of Computer Science, Bar-Ilan University, Ramat Gan, Israel}
\affil[3]{Institute of Informatics, University of Warsaw, Warsaw, Poland\\
    \texttt{$\{$kociumaka,jrad,rytter,jks,walen,w.zuba$\}$@mimuw.edu.pl}}
\affil[4]{CWI, Amsterdam, The Netherlands\\
    \texttt{solon.pissis@cwi.nl}}
\date{\vspace{-5ex}}

\begin{document}
\maketitle
\thispagestyle{empty}
\begin{abstract}
The $k$-mismatch problem consists in computing the Hamming distance between a pattern $P$ of length $m$ and
every length-$m$ substring of a text $T$ of length $n$, if this distance is no more than $k$.	
In many real-world applications, any cyclic rotation of $P$ is a relevant pattern, and thus one is interested
in computing the minimal distance of every length-$m$ substring of $T$ and any cyclic rotation of $P$.
This is the circular pattern matching with $k$ mismatches ($k$-CPM) problem.
A multitude of papers have been devoted to solving this problem but, to the best of our knowledge, only average-case upper bounds are known.
In this paper, we present the first non-trivial worst-case upper bounds for the $k$-CPM problem.
Specifically, we show an $\cO(nk)$-time algorithm and an $\cO(n+\frac{n}{m}\,{k^4})$-time algorithm.
The latter algorithm applies in an extended way a technique that was very recently developed for the $k$-mismatch problem [Bringmann et al., SODA 2019].

A preliminary version of this work appeared at FCT 2019. 
In this version we improve the time complexity of the main algorithm from 
$\cO(n+\frac{n}{m}\,{k^5})$ to $\cO(n+\frac{n}{m}\,{k^4})$.
\end{abstract}

\clearpage
\setcounter{page}{1}

\section{Introduction}\label{sec:intro}

Pattern matching is a fundamental problem in computer science~\cite{AlgorithmsOnStrings}.
It consists in finding all substrings of a text $T$ of length $n$ that match a pattern $P$ of length $m$.
In many real-world applications, a measure of similarity is usually introduced allowing for \emph{approximate} matches
between the given pattern and substrings of the text. The most widely-used similarity measure 
is the Hamming distance between the pattern and all length-$m$ substrings of the text.

Computing the Hamming distance between $P$ and all length-$m$ substrings of $T$ has been investigated for the past 30 years.
The first efficient solution requiring $\cO(n\sqrt{m\log m})$ time was independently developed by Abrahamson~\cite{Abrahamson} and Kosaraju~\cite{Kosaraju} in 1987.
The $k$-mismatch version of the problem asks for finding only the substrings of $T$ that are close to $P$, specifically, at Hamming distance at most $k$.
The first efficient solution to this problem running in $\cO(nk)$ time was developed in 1986 by Landau and Vishkin~\cite{DBLP:journals/tcs/LandauV86}.
It took almost 15 years for a breakthrough result by Amir et al.~improving this to $\cO(n\sqrt{k \log k})$~\cite{DBLP:journals/jal/AmirLP04}.
More recently, there has been a resurgence of interest in the $k$-mismatch problem. Clifford et al.\ gave 
an $\cO((n/m)(k^2\log k) + n \text{polylog} n)$-time algorithm~\cite{DBLP:conf/soda/CliffordFPSS16}, 
which was subsequently improved further by Gawrychowski and Uznański to $\cO((n/m)(m+k\sqrt{m})\text{polylog} n)$~\cite{DBLP:conf/icalp/GawrychowskiU18}.
In~\cite{DBLP:conf/icalp/GawrychowskiU18}, the authors have also provided evidence that any further progress in this problem is rather unlikely.

The $k$-mismatch problem has also been considered on compressed representations of the 
text~\cite{DBLP:journals/talg/BilleFG09,DBLP:conf/soda/BringmannWK19,DBLP:conf/isaac/GawrychowskiS13,DBLP:conf/stringology/Tiskin14},
in the parallel model~\cite{DBLP:journals/tcs/GalilG87}, and in the streaming model~\cite{DBLP:conf/focs/PoratP09,DBLP:conf/soda/CliffordFPSS16,DBLP:conf/soda/CliffordKP19}.
Furthermore, it has been considered in non-standard stringology models, such as the parameterized 
model~\cite{DBLP:journals/talg/HazayLS07} and the order-preserving model~\cite{DBLP:journals/tcs/GawrychowskiU16}.

In many real-world applications, such as in bioinformatics~\cite{BMCgenomicsAyad2017,DBLP:journals/almob/GrossiIMPPRV16,DBLP:books/sp/17/IliopoulosPR17,DBLP:conf/wea/BartonIKPRV15} 
or in image processing~\cite{DBLP:journals/prl/AyadBP17,DBLP:journals/prl/Palazon-Gonzalez15,DBLP:journals/pr/Palazon-GonzalezMV14,DBLP:journals/ivc/Palazon-GonzalezM12}, 
any cyclic shift (rotation) of $P$ is a relevant pattern, and thus one is interested
in computing the minimal distance of every length-$m$ substring of $T$ and any cyclic rotation of $P$, if this distance is no more than $k$.
This is the circular pattern matching with $k$ mismatches ($k$-CPM) problem.
A multitude of papers~\cite{DBLP:journals/jea/FredrikssonN04,DBLP:journals/almob/BartonIP14,DBLP:conf/bcb/AzimIRS14,DBLP:conf/isbra/AzimIRS15,DBLP:conf/lata/BartonIP15,DBLP:journals/jea/HirvolaT17} 
have thus been devoted to solving the $k$-CPM problem but, to the best of our knowledge, only average-case upper bounds are known; 
i.e.~in these works the assumption is that text $T$ is uniformly random. The main result states that, after preprocessing pattern $P$, 
the average-case optimal search time of $\cO(n\frac{k+\log m}{m})$~\cite{DBLP:conf/cpm/ChangM94} 
can be achieved for certain values of the error ratio $k/m$ (see~\cite{DBLP:conf/lata/BartonIP15,DBLP:journals/jea/FredrikssonN04} for more details on the preprocessing costs).
Note that the exact (no mismatches allowed) version of the CPM problem can be solved as fast as exact pattern matching; namely, in $\cO(n)$ time~\cite{DBLP:books/daglib/0025614}.

In this paper, we draw our motivation from {\bf (i)} the importance of the $k$-CPM problem in real-world applications 
and {\bf (ii)} the fact that no (non-trivial) worst-case upper bounds are known.
Trivial here refers to running the fastest-known algorithm for the $k$-mismatch problem~\cite{DBLP:conf/icalp/GawrychowskiU18} separately 
for each of the $m$ rotations of $P$. This yields an $\cO(n(m+k\sqrt{m})\text{polylog} n)$-time algorithm for the $k$-CPM problem.
This is clearly unsatisfactory: it is a simple exercise to design an $\cO(nm)$-time or an $\cO(nk^2)$-time algorithm.
In an effort to tackle this unpleasant situation, we present two much more efficient algorithms: a simple $\cO(nk)$-time algorithm and an $\cO(n+\frac{n}{m}\,{k^4})$-time algorithm.
Our second algorithm applies in an extended way a technique that was developed very recently for $k$-mismatch pattern matching in grammar compressed strings
by Bringmann et al.~\cite{DBLP:conf/soda/BringmannWK19}. We also show that both of our algorithms can be implemented in $\cO(m)$ space.

A preliminary version of this work was published as~\cite{DBLP:conf/fct/Charalampopoulos19}.

\paragraph{\bf Our approach}
We first consider a simple version of the problem (called \textsc{Anchor-Match})
in which we are given a position in $T$ (an \emph{anchor}) which belongs to
potential $k$-mismatch circular occurrences of $P$. A simple $\Oh(k)$-time algorithm is given (after linear-time preprocessing) to compute all relevant occurrences.
By considering separately each position in $T$ as an anchor we obtain an $\Oh(nk)$-time algorithm. 
The concept of an anchor is extended to the so-called \emph{matching pairs}: when we know a pair of positions, one in $P$ and the other in $T$, that are aligned.
Then comes the idea of a \emph{sample} $\sample$, which is a fragment of $P$ of length $\Theta(m/k)$ which supposedly exactly matches a corresponding fragment in $T$.
We choose $\Oh(k)$ samples and work for each of them and for windows of $T$ of size $2m$.
As it is typical in many versions of pattern matching, our solution is split into periodic and non-periodic cases.
If $\sample$ is non-periodic the sample occurs only $\Oh(k)$ times in a window and each occurrence gives a matching pair (and consequently
two possible anchors). Then we perform \textsc{Anchor-Match} for each such anchor. The hard part is the case when $\sample$ is periodic. Here we compute
all exact occurrences of $\sample$ and obtain $\Oh(k)$ groups of occurrences, each one being an arithmetic progression. Now
each group is processed using the approach ``few matches or almost periodicity'' of Bringmann et al.~\cite{DBLP:conf/soda/BringmannWK19}.
In the latter case periodicity is approximate allowing up to $k$ mismatches.
Finally, we are able to decrease the exponent of $k$ by one in the complexity using a marking trick. 

\section{Preliminaries}
Let $S=S[0]S[1]\cdots S[n-1]$ be a \emph{string} of length $|S|=n$ over an integer alphabet $\Sigma$. 
The elements of $\Sigma$ are called \emph{letters}.
For two positions $i$ and $j$ on $S$, we denote by $S[i\dd j]=S[i]\cdots S[j]$ the \emph{fragment} of $S$ that starts at position $i$ and ends at position $j$ (it equals the empty string $\varepsilon$ if $j<i$). 
A \emph{prefix} of $S$ is a fragment that starts at position $0$, i.e.~of the form $S[0\dd j]$,
and a \emph{suffix} is a fragment that ends at position $n-1$, i.e.~of the form $S[i\dd n-1]$.
For an integer $p$, we define the $p$th \emph{power} of $S$, denoted by $S^p$, as the string 
obtained from concatenating $p$ copies of $S$. $S^\infty$ denotes the string obtained by concatenating infinitely many copies.
If $S$ and $S'$ are two strings of the same length, then by $S =_k S'$ 
we denote the fact that $S$ and $S'$ have at most $k$ mismatches, that is, that the Hamming distance between $S$ and $S'$ does not exceed $k$.

We say that a string $S$ has period $q$ if $S[i] = S[i+q]$ for all $i=0,\ldots,|S|-q-1$.
String $S$ is periodic if it has a period $q$ such that $2q \le |S|$. We denote the smallest period of $S$ by $\per(S)$.
Fine and Wilf's periodicity lemma~\cite{fine1965uniqueness} asserts that if a string of length $n$ has periods $p$ and $q$ and $n \ge p+q-1$, then the string has a period
$\gcd(p,q)$.

For a string $S$, by $\rot_x(S)$ for $0 \le x < |S|$, we denote the string that is obtained from $S$ by moving the prefix of $S$ of length $x$ to its suffix.
We call the string $\rot_x(S)$ (or its representation $x$) a \emph{rotation} of $S$. More formally, we have
\[\rot_x(S)=VU\text{, where }S=UV\text{ and }|U|=x.\]

\subsection{Anatomy of Circular Occurrences}
\setcounter{footnote}{0}

In what follows, we denote by $m$ the length of the pattern $P$ and by $n$ the length of the text $T$. We say that $P$ has a \emph{$k$-mismatch circular occurrence} (in short
\emph{$k$-occurrence}) in $T$ at position $p$ if $T[p \dd p+m-1] =_k \rot_x(P)$ for some rotation $x$.
In this case, the position $x$ in the pattern
is called the \emph{split point} of the pattern and $p+(m-x) \bmod m$~\footnote{
The modulo operation is used to handle the trivial rotation with $x=0$.} is called the \emph{anchor} in the text. 
In other words, if $P=UV$ and its rotation $VU$ occurs in $T$, then the first position of $V$
in $P$ is the split point of this occurrence, and the first position of $U$ in $T$ is the anchor of this occurrence (see Fig.~\ref{fig:anchorsplit}).

\begin{figure}[ht]
  \centering
\centerline{\includegraphics[width=6cm]{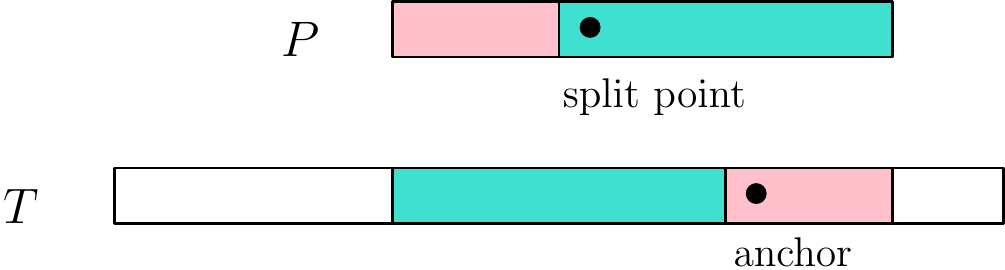}}
  \caption{The split point and the anchor for a $k$-occurrence of $P$ in $T$.
  }\label{fig:anchorsplit}
\end{figure}

The main problem in scope can now be stated as follows.

\defproblem{$k$-CPM Problem}
{
  Text $T$ of length $n$, pattern $P$ of length $m$, and positive integer $k$.
}
{
  All positions of $T$ that contain a $k$-occurrence of $P$.
}

For an integer $z$, let us denote $\W_z = [z \dd z+m-1]$ (\emph{window} of size $m$).
Intuitively, this window corresponds to a length-$m$ fragment of the text $T$.
For a $k$-occurrence at position $p$ of $T$ with rotation $x$, 
we introduce a set of pairs of positions in the fragment of the text and the corresponding positions from the original (unrotated) pattern $P$:
\[M(p,x) = \{(i,(i-p+x) \bmod m)\,:\,i \in \W_p\}.\]
The pairs $(i,j)\in M(p,x)$ are called \emph{matching pairs} of an occurrence $p$ with rotation $x$.
In particular, $(p+((m-x)\bmod m),0) \in M(p,x)$. An example is provided in Fig.~\ref{fig:matching_pair}.

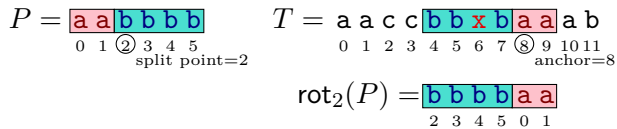
\begin{figure}[ht]
  \centering
  \begin{tikzpicture}
  \tikzstyle{red}=[color=red!90!black]
  \tikzstyle{darkred}=[color=red!50!black]
  \tikzstyle{blue}=[color=blue!50!black]
  \tikzstyle{black}=[color=black]
  \definecolor{turq}{RGB}{74,223,208}
  \definecolor{pink}{RGB}{254,193,203}

  \begin{scope}[xshift=-3.5cm,yshift=-1cm]
  \node at (-0.3,0) [left, above] {$P=$};
  \draw [fill=pink] (0.16, 0.07) rectangle (0.75, 0.38);
  \draw [fill=turq] (0.75, 0.07) rectangle (1.94, 0.38);
  \foreach \c/\s [count=\i from 0] in {a/darkred,a/darkred,b/blue,b/blue,b/blue,b/blue} {
    \node at (\i * 0.3 + 0.3, 0) [above, \s] {\tt \c};
    \node at (\i * 0.3 + 0.3, 0.1) [below] {\tiny \i};
  }
  \draw (2 * 0.3 + 0.3, -0.08) node[circle, minimum width=0.25cm,inner sep=0.5mm,draw] {}
    node [below right] {\tiny split point=2};
  \end{scope}
  
  \begin{scope}[yshift=-1cm]
  \node at (-0.3,0) [left, above] {$T=$};
  \draw [fill=pink] (2.54, 0.07) rectangle (3.13, 0.38);
  \draw [fill=turq] (1.35, 0.07) rectangle (2.54, 0.38);
  \foreach \c/\s [count=\i from 0] in {a/black,a/black,c/black,c/black,b/blue,b/blue,x/red,b/blue,a/darkred,a/darkred,a/black,b/black} {
    \node at (\i * 0.3 + 0.3, 0) [above, \s] {\tt \c};
    \node at (\i * 0.3 + 0.3, 0.1) [below] {\tiny \i};
  }
  \draw (8 * 0.3 + 0.3, -0.08) node[circle, minimum width=0.25cm,inner sep=0.5mm,draw] {}
    node [below right] {\tiny anchor=8};
  \end{scope}

  \begin{scope}[yshift=-2cm, xshift=1.2cm]
  \node at (-0.7,-0.1) [above] {$\rot_2(P)=$};
  \draw [fill=pink] (1.34, 0.07) rectangle (1.93, 0.38);
  \draw [fill=turq] (0.15, 0.07) rectangle (1.34, 0.38);
  \foreach \c/\s [count=\i from 0] in {b/blue,b/blue,b/blue,b/blue,a/darkred,a/darkred} {
    \node at (\i * 0.3 + 0.3, 0) [above, \s] {\tt \c};
  }
  \foreach \ii [count=\i from 0] in {2,3,4,5,0,1} {
    \node at (\i * 0.3 + 0.3, 0.1) [below] {\tiny \ii};
  }
  \end{scope}

  \end{tikzpicture}
  \caption{
    A 1-occurrence of $P\,=\,\mathtt{aabbbb}$ in text $T\,=\,\mathtt{aaccbbxbaaab}$ at position $p=4$ with
    rotation $x=2$;
    $M(4, 2)=\{(4, 2), (5, 3), (6, 4), (7, 5), (8, 0), (9, 1)\}$.
  }\label{fig:matching_pair}
\end{figure}

\section{An $\cO(nk)$-time Algorithm}\label{sec:nk}

We first introduce an auxiliary problem in which one wants to compute all $k$-occurrences of $P$ in $T$ with a given anchor $\mathbf{a}$.
This problem describes the core computational task in our first solution.

\defproblem{Anchor-Match Problem}
{Text $T$ of length $n$, pattern $P$ of length $m$, positive integer $k$, and position $\mathbf{a}$.
}
{All $k$-occurrences $p$ of $P$ in $T$ with anchor $\mathbf{a}$, represented as a collection of $\Oh(k)$ intervals.
}

\medskip
For a string $X$ let us denote by $X_{(i)}$ and $X^{(i)}$ its fragments of length $m$ starting at 
position $i$ and ending at position $i$, respectively.
Moreover, for a binary string $X$, by $||X||$ we denote the arithmetic sum of characters in $X$.
We define the following auxiliary problem.

\defproblem{Light-Fragments Problem}
{Positive integers $m$, $k$ and a string $V$ of length $n$ over alphabet $\{0,1\}$ containing $\Oh(k)$ non-zero characters.
The string $V$ is specified by its positions with non-zero characters (sorted increasingly).
}
{The set $A\;=\; \{\,i\ :\  ||V_{(i)}||\,\le\, k\,\}$
represented as a collection of $\Oh(k)$ intervals.
} 

\begin{lemma}\label{Nov6}
The \textsc{Light-Fragments} problem can be solved in $\Oh(k)$ time. 
\end{lemma}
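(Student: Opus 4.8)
The plan is to work directly with the $\Oh(k)$ positions that carry a non-zero (hence unit) character. Let $q_1 < q_2 < \cdots < q_t$ be these positions, where $t=\Oh(k)$ and the sorted list is provided in the input. Since each window $V_{(i)}=V[i\dd i+m-1]$ is an interval, the non-zero positions it contains form a contiguous block $q_a,\ldots,q_b$ of the sorted list, and therefore $\|V_{(i)}\|=b-a+1$. The crucial observation is that a window is \emph{heavy}, i.e.\ $\|V_{(i)}\|>k$, if and only if it fully contains some block of $k+1$ consecutive non-zero positions $q_j,q_{j+1},\ldots,q_{j+k}$ (for the forward direction, truncate the contained block to its first $k+1$ elements; the converse is immediate).

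Next I would translate this characterization into a condition on $i$. The window $V_{(i)}$ contains $q_j,\ldots,q_{j+k}$ exactly when $q_j\ge i$ and $q_{j+k}\le i+m-1$, that is, when $q_{j+k}-m+1\le i\le q_j$. Hence the set of starting positions of heavy windows is
\[
B \;=\; \bigcup_{j=1}^{t-k}\,\bigl[\,q_{j+k}-m+1 \dd q_j\,\bigr],
\]
a union of at most $t-k=\Oh(k)$ integer intervals (empty ones, which arise precisely when the $k+1$ marked positions span more than $m$, are discarded), and the desired set $A$ is simply the complement of $B$ inside the range of valid start positions $[0\dd n-m]$. Each individual interval $[\,q_{j+k}-m+1\dd q_j\,]$ is computed in $\Oh(1)$ time, so $B$ is assembled in $\Oh(k)$ time.

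The point that makes the $\Oh(k)$ bound attainable, rather than $\Oh(k\log k)$, is that no sorting is needed: as $j$ increases, both endpoints $q_{j+k}-m+1$ and $q_j$ are non-decreasing, so the intervals comprising $B$ are already produced in sorted order (discarding empty ones preserves this). I would therefore merge overlapping intervals with a single left-to-right scan, and then read off the complement within $[0\dd n-m]$ in one further pass; both steps run in $\Oh(k)$ time and each yields $\Oh(k)$ intervals. The only edge case is $t\le k$, in which case no window can be heavy and $A=[0\dd n-m]$ is returned as a single interval.

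The main conceptual step is the equivalence between ``$\|V_{(i)}\|>k$'' and ``the window swallows $k+1$ consecutive marked positions''; once this is established, the monotonicity of the endpoints is exactly what lets us bypass a sorting step and stay within the linear budget.
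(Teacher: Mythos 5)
Your proposal is correct, but it follows a genuinely different route from the paper's proof. The paper works with the count function itself: it defines the prefix-count function $f(x)=|I\cap[0,x]|$, which is piecewise constant with $\Oh(k)$ pieces, forms $g(x)=f(x+m-1)-f(x-1)=||V_{(x)}||$, also piecewise constant with $\Oh(k)$ pieces and computable in $\Oh(k)$ time from the sorted list, and then simply reports the pieces on which $g\le k$. You never evaluate any counts; instead you prove a combinatorial characterization---$||V_{(i)}||>k$ if and only if the window wholly contains $k+1$ consecutive marked positions $q_j,\ldots,q_{j+k}$---which lets you write the heavy set directly as the union of the $\Oh(k)$ explicit intervals $[\,q_{j+k}-m+1 \dd q_j\,]$, merge them in one pass (they come out already sorted, by monotonicity of both endpoints in $j$), and return the complement inside $[0\dd n-m]$. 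Both arguments exploit the sortedness of the marked positions to avoid any sorting step and stay within $\Oh(k)$ time, and both produce $\Oh(k)$ output intervals. The paper's sweep is slightly more informative, since it yields the entire function $g$ (so exact counts, or any other threshold, come for free), whereas your argument isolates a clean equivalence and reduces everything to interval bookkeeping; for the lemma as stated the two are equally valid.
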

\begin{proof}
  Let $I$ be the set of positions of $V$ with non-zero characters; $|I| = \Oh(k)$ by definition.
  We define a piecewise constant function $f$ on integers such that $f(x) = | I \cap [0,x] |$.
  Let $g(x) = f(x+m-1)-f(x-1)$.
  Then $g(x) = | I \cap [x,x+m-1] |$.
  Function $f$ has $\Oh(k)$ pieces, so $g$ has $\Oh(k)$ pieces as well and both can be computed in $\Oh(k)$ time since $I$ is sorted.
  In the end we report the pieces where $g$ has value at most $k$.
\end{proof}

Let us recall a standard algorithmic tool for preprocessing text $T$.
We denote the length of the longest common prefix (resp.~suffix) of two strings $U$ and $V$ by $\lcp(U,V)$ (resp.~$\lcsuf(U, V)$).
There is an $\Oh(n)$-sized data structure answering such queries over suffixes (resp.~prefixes) of $T$ in $\Oh(1)$ time after $\Oh(n)$-time preprocessing.
It consists of the suffix array of $T$ and a data structure for answering range minimum queries; see~\cite{AlgorithmsOnStrings}.
Using the kangaroo method~\cite{DBLP:journals/tcs/LandauV86,DBLP:journals/tcs/GalilG87}, 
these queries can handle mismatches; after an $\cO(n)$-time preprocessing of $T$,
longest common prefix (resp.~suffix) queries with up to $k$ mismatches can be answered in $\cO(k)$ time.

\begin{lemma}\label{lem:alg-anchor-match-problem}
  After $\cO(n)$-time preprocessing, the answer to \textsc{Anchor-Match} problem, represented as a union of $\Oh(k)$ intervals, can be computed in $\Oh(k)$ time.
\end{lemma}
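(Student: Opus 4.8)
The plan is to reduce \textsc{Anchor-Match} to the \textsc{Light-Fragments} problem of \cref{Nov6}. Fixing the anchor $\mathbf{a}$ forces position $\mathbf{a}$ of $T$ to be aligned with $P[0]$, so by the definition of matching pairs every $k$-occurrence with anchor $\mathbf{a}$ starts at some $p \in \{\mathbf{a}-m+1,\ldots,\mathbf{a}\}$, and inside its window $\W_p$ the text position $i$ is compared against $P[(i-\mathbf{a})\bmod m]$. Hence, defining the binary string $V$ on the index range $[\mathbf{a}-m+1,\mathbf{a}+m-1]$ by $V[i]=1$ iff $T[i]\neq P[(i-\mathbf{a})\bmod m]$, the Hamming distance realised at $p$ equals $\lVert V_{(p)}\rVert$. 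Thus $p$ is a $k$-occurrence iff $\lVert V_{(p)}\rVert\le k$, and if $V$ had only $\Oh(k)$ non-zero positions we could finish immediately with \cref{Nov6}.

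The obstacle is that $V$ may contain up to $\Theta(m)$ ones, so I first have to discard the irrelevant ones. Split each window $\W_p$ at the anchor into a left part $[p,\mathbf{a}-1]$, aligned with the suffix $P[(p-\mathbf{a})\bmod m\,\dd\,m-1]$, and a right part $[\mathbf{a},p+m-1]$, aligned with the prefix $P[0\,\dd\,p+m-1-\mathbf{a}]$. If the total number of mismatches is at most $k$, then each part has at most $k$ mismatches. Consequently, only the $k+1$ mismatches of $V$ closest to $\mathbf{a}$ on each side can ever lie in a valid window: if the left part reaches past the $(k+1)$-st mismatch to the left of $\mathbf{a}$, or the right part past the $(k+1)$-st mismatch to the right, the window is already invalid. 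I would compute these $\Oh(k)$ mismatches by the kangaroo method---iterated $\lcp$ (resp.\ $\lcsuf$) queries with mismatches between the appropriate fragments of $T$ and $P$---in $\Oh(k)$ time, using the $\Oh(n)$-time preprocessing described above, built over both $T$ and $P$. Let $\ell$ and $r$ denote the positions of the $(k+1)$-st such mismatch on the left and on the right (with $\ell=-\infty$, $r=+\infty$ if fewer than $k+1$ exist); the windows that can possibly be valid are exactly those with
\[
  p \in R \;=\; \bigl[\max(0,\ \mathbf{a}-m+1,\ \ell+1),\ \min(n-m,\ \mathbf{a},\ r-m)\bigr].
\]

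Finally, I would let $V'$ be the indicator string of only these $\le 2(k+1)$ retained mismatches. For every $p\in R$ the two parts of $\W_p$ avoid all discarded mismatches, so $\lVert V'_{(p)}\rVert=\lVert V_{(p)}\rVert$, while outside $R$ the window is invalid anyway. Running \cref{Nov6} on $V'$, which has $\Oh(k)$ ones, yields as $\Oh(k)$ intervals all $p$ with $\lVert V'_{(p)}\rVert\le k$; intersecting these with $R$ gives precisely the $k$-occurrences with anchor $\mathbf{a}$, still as $\Oh(k)$ intervals, in $\Oh(k)$ total time. The step requiring the most care is this restriction to $R$: dropping the far mismatches makes $V'$ undercount outside $R$, so without the intersection the reduction would report false positives. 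A secondary check is the handling of the degenerate split point $x=0$ (window $\W_{\mathbf{a}}$, with an empty left part) and of the text and pattern boundaries, all of which are already folded into the definition of $R$.
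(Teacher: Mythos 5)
Your proof is correct and follows essentially the same route as the paper's: both reduce \textsc{Anchor-Match} to the \textsc{Light-Fragments} problem by aligning $T[\mathbf{a}]$ with $P[0]$, using the kangaroo method (after $\Oh(n)$-time $\lcp$/$\lcsuf$ preprocessing over $P$ and $T$) to extract only the $\Oh(k)$ mismatch positions nearest the anchor on each side. The only cosmetic difference is that the paper discards the far mismatches by truncating the binary string to $L''R''$ (so invalid windows fall outside the string), whereas you keep the full index range and intersect the output with the validity interval $R$; these are equivalent, and your treatment of the text-boundary and $x=0$ cases is in fact more explicit than the paper's.
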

\begin{proof}
  In the preprocessing we prepare a data structure for $\lcp$  queries in $P \# T$, for a special
  character $\#$ that occurs neither in $P$ nor in $T$.
 
  Consider now a query for an anchor $\mathbf{a}$ over $T$. Let $L=T^{(\mathbf{a}-1)}$ and $R=T_{(\mathbf{a})}$.
  We define a binary string $L'$ such that $L'[i]=1$ if and only if $L[i]\ne P[i]$. We define $R'$ analogously. 
  Let $L''$ be the longest suffix of $L'$ such that $||L''||\le k$. Let $R''$ be the longest prefix of $R'$ such that $||R''|| \le k$.
 
  Using the kangaroo method~\cite{DBLP:journals/tcs/LandauV86,DBLP:journals/tcs/GalilG87}, the strings $L''$, $R''$ can be constructed in $\Oh(k)$ time.
  The \textsc{Anchor-Match} problem now reduces to the \textsc{Light-Fragments} problem for string $V=L''R''$. 
\end{proof}

For an illustration of~\cref{lem:alg-anchor-match-problem} inspect Fig.~\ref{fig:light_fragments}.

\begin{figure}[ht]
  \centering
  \begin{tikzpicture}
  \tikzstyle{red}=[color=red!90!black]
  \tikzstyle{darkred}=[color=red!50!black]
  \tikzstyle{blue}=[color=blue!50!black]
  \tikzstyle{black}=[color=black]
  \definecolor{turq}{RGB}{74,223,208}
  \definecolor{pink}{RGB}{254,193,203}
  \definecolor{pink-orange}{rgb}{1.0, 0.6, 0.4}
  
  
  \begin{scope}
  \node at (-0.3,0) [left, above] {$P^2=$};
  \draw [fill=turq] (0.15, 0.07) rectangle (4.95, 0.38);
  \foreach \c [count=\i from 0] in {a,b,a,a,b,a,b,a,a,b,a,a,b,a,b,a} {
    \node at (\i * 0.3 + 0.3, 0) [above] {\tt \c};
  }
  \end{scope}
  
  
  \begin{scope}[xshift=4.8cm]
  \draw [fill=pink] (0.15, 0.07) rectangle (4.95, 0.38);
  \foreach \c [count=\i from 0] in {a,b,a,a,b,a,b,a,a,b,a,a,b,a,b,a} {
    \node at (\i * 0.3 + 0.3, 0) [above] {\tt \c};
  }
  \end{scope}
  
  
  \begin{scope}[yshift=-0.5cm]
  \node at (-0.3,0) [left, above] {$T=$};
  \foreach \c [count=\i from 0] in {b,b,a,a,b,a,a,a,a,b,a,a,a,a,b,b,a,b,a,b,b,a,b,a,b,b,a,a,b,a,a,b} {
    \node at (\i * 0.3 + 0.3, 0) [above] {\tt \c};
  }
  \end{scope}
  
  
  \begin{scope}[yshift=-1cm]
  \node at (-0.3,0) [left, above] {$V=$};
  \draw [fill=pink-orange] (0.45, -0.21) rectangle (1.35, 0.05);
  \draw [fill=pink-orange] (2.25, -0.21) rectangle (2.85, 0.05);
  \draw [fill=pink-orange] (4.05, -0.21) rectangle (4.65, 0.05);
  \draw (2.25, 0.08) rectangle (7.05, 0.38);
  \foreach \c [count=\i from 0] in {,0,0,0,0,0,1,0,0,0,0,0,1,0,0,1,0,0,0,1,0,0,0,0,1,0,0,0,0,0,1,} {
    \node at (\i * 0.3 + 0.3, 0) [above] {\tt \c};
    \node at (\i * 0.3 + 0.3, 0.1) [below] {\tiny \i};
  }
  \draw (16 * 0.3 + 0.3, -0.08) node[circle, minimum width=0.28cm,inner sep=0.5mm,draw] {}
    node [below right] {\tiny anchor=16};
  \end{scope}

\end{tikzpicture}
  \caption{An illustration of the setting in~\cref{lem:alg-anchor-match-problem} with $P=(\texttt{abaababa})^2$, the text as in the figure, anchor $\mathbf{a}=16$ and $k=3$. The string $V$, used in the proof of the lemma, shows the positions of the at most $k$ mismatches to the left and to the right of the anchor. The output consists of the three intervals $[1,3]$, $[7,8]$ and $[13,14]$ shown in orange. For example, 7 is an occurrence since the fragment of $V$ of length $|P|$ starting at this position (represented by a rectangle) contains at most $k$ ones.
  }\label{fig:light_fragments}
\end{figure}
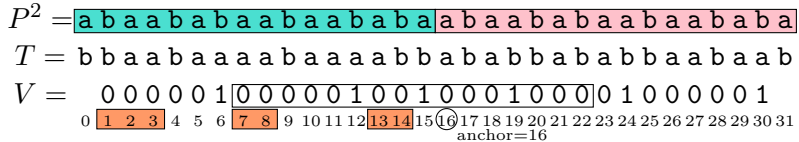
  



 


\begin{proposition}\label{prop:nk}
 $k$-CPM can be solved in $\cO(nk)$ time and $\cO(n)$ space.
\end{proposition}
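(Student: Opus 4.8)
The plan is to reduce $k$-CPM to $n$ independent instances of the \textsc{Anchor-Match} problem, one per candidate anchor. The key is the anchor/occurrence correspondence set up above: a $k$-occurrence at position $p$ with rotation $x$ carries the anchor $\mathbf{a}=p+((m-x)\bmod m)$, and since $0\le p\le n-m$ and $0\le x<m$, this anchor always lies in $[0\dd n-1]$. Conversely, \cref{lem:alg-anchor-match-problem} enumerates, for a fixed anchor $\mathbf{a}$, precisely those $k$-occurrences whose anchor equals $\mathbf{a}$. Hence the set of all $k$-occurrences is exactly the union, over all $n$ values $\mathbf{a}\in[0\dd n-1]$, of the answers to \textsc{Anchor-Match} with anchor $\mathbf{a}$; this union misses no occurrence and introduces no spurious position.

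Algorithmically, I would first run the $\cO(n)$-time preprocessing of \cref{lem:alg-anchor-match-problem}, building the $\lcp$ data structure over $P\#T$. Then, for each $\mathbf{a}=0,1,\ldots,n-1$ in turn, I invoke the $\cO(k)$-time query, obtaining a collection of $\cO(k)$ intervals of occurrence positions. Summed over all anchors this is $\cO(nk)$ query time and $\cO(nk)$ intervals produced in total, which together with the $\cO(n)$ preprocessing yields the claimed $\cO(nk)$ running time.

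The only remaining point is to assemble these $\cO(nk)$ intervals into the required output --- the set of text positions containing some $k$-occurrence --- in $\cO(n)$ space and without an extra time blow-up. I would maintain a single difference array $D[0\dd n]$ initialised to zero: for each interval $[l\dd r]$ returned by a query I increment $D[l]$ and decrement $D[r+1]$ on the fly, discarding the interval immediately. This costs $\cO(1)$ per interval and never stores more than $\cO(k)$ intervals at once. After all anchors are processed, a single left-to-right prefix sum over $D$ reveals, in $\cO(n)$ time, exactly the positions covered by at least one interval, i.e.\ the positions of $T$ carrying a $k$-occurrence; duplicates arising when a position has several valid anchors are handled transparently, since we only test whether the accumulated count is positive. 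The total time is $\cO(n)+\cO(nk)+\cO(n)=\cO(nk)$, and the working space is $\cO(n)$ for the $\lcp$ structure and $D$, plus $\cO(k)$ scratch space per query.

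I do not expect a genuine obstacle. Correctness is immediate from the anchor/occurrence correspondence, and the only real care is in bookkeeping: confirming that the anchor of every occurrence falls in $[0\dd n-1]$ (so that iterating over all $n$ candidate anchors is exhaustive) and aggregating the intervals in $\cO(n)$ rather than $\cO(nk)$ space via the difference-array trick.
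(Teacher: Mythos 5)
Your proposal is correct and is essentially identical to the paper's own proof: invoke the \textsc{Anchor-Match} algorithm of \cref{lem:alg-anchor-match-problem} for every anchor $\mathbf{a}\in[0\dd n-1]$, and aggregate the resulting $\cO(nk)$ intervals with a difference array (counting interval starts and ends per position) followed by a prefix-sum sweep to keep the space at $\cO(n)$. The only addition is your explicit check that every occurrence's anchor lies in $[0\dd n-1]$, which the paper leaves implicit.
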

\begin{proof}
We invoke the algorithm of Lemma~\ref{lem:alg-anchor-match-problem} for all $\mathbf{a} \in [0 \dd n-1]$ and obtain $\cO(nk)$ intervals of $k$-occurrences of $P$ in $T$.
Instead of storing all the intervals, we count how many intervals start and end at each position of $T$.
We can then compute the union of the intervals by processing these counts from left to right.
\end{proof}

\section{Algorithmic tools}
In this section we introduce further algorithmic tools to get to our second solution. 

\subsection{Internal Queries in a Text}

Let $T$ be a string of length $n$ called text.
An Internal Pattern Matching (IPM) query, for two given fragments $F$ and $G$ of the text, such that $|G| \le 2|F|$,
computes the set of all occurrences of $F$ in $G$.
If there are more than two occurrences, they form an arithmetic sequence with difference $\per(F)$.
A data structure for IPM queries in $T$ can be constructed in $\Oh(n)$ time and answers queries in $\Oh(1)$ time
(see~\cite{DBLP:conf/soda/KociumakaRRW15} and~\cite[Theorem 1.1.4]{Kocium}).
It can be used to compute all occurrences of a given fragment $F$ of length $p$ in $T$, expressed as a union of $\Oh(n/p)$ pairwise disjoint
arithmetic sequences with difference $\per(F)$, in $\Oh(n/p)$ time.

\subsection{Simple Geometry of Arithmetic Sequences of Intervals}\label{subsec:chains}

We next present algorithms that will be used in subsequent proofs for handling regular sets of intervals.

For an interval $I$ and an integer $r$, let $I\oplus r\;=\; \{\,i+r\::\; i\in I\,\}$.
We define \[\Chain_q(I,a)\;=\; I\,\cup\, (I\oplus q)\,\cup\, (I\oplus 2q)\,\cup \dots\cup\, (I\oplus aq).\]
This set is further called an \emph{interval chain} (with difference $q$).
Note that it can be represented in $\Oh(1)$ space using four integers: $a$, $q$, and the endpoints of $I$. 
An illustration of an interval chain representing the output for a problem defined in the next subsection can be found in Fig.~\ref{fig:int_chain}.

For a given value of $q$, let us fit the integers from $[1\dd n]$ into the cells of a grid of width $q$ so that the first row
consists of numbers 1 through $q$, the second of numbers $q+1$ to $2q$, etc.
Let us call this grid $\G_q$.
A chain $\Chain_q$ can be conveniently represented in the grid $\G_q$ using the following lemma from~\cite{DBLP:conf/lata/KociumakaRRSWZ19}.

\begin{lemma}[\cite{DBLP:conf/lata/KociumakaRRSWZ19}]\label{lem:chain_rect}
  The set $\Chain_q (I,a)$ is a union of $\Oh(1)$ orthogonal rectangles in $\G_q$.
  The coordinates of the rectangles can be computed in $\Oh(1)$ time.
\end{lemma}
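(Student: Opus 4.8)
The plan is to make the claim concrete by understanding what $\Chain_q(I,a)$ looks like when its integers are placed row-by-row into the grid $\G_q$ of width $q$. Write $I = [\ell \dd r]$. Each translate $I \oplus tq$ for $t = 0,1,\dots,a$ is an interval of the same length $r-\ell+1$ as $I$, and consecutive translates are shifted by exactly $q$, i.e.\ by one full row of $\G_q$. So the first idea is that the set of cells occupied by $I \oplus tq$ is simply the set of cells occupied by $I$ but moved down by $t$ rows; their horizontal extent (the column positions modulo $q$) is identical for every $t$. This already shows the chain has a strongly rectangular structure vertically, and the only subtlety is that $I$ itself may not sit inside a single row.

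The key step is therefore a case analysis on the length $|I| = r - \ell + 1$ relative to $q$. First I would treat the clean case where $I$ fits within a single row of $\G_q$, meaning no multiple of $q$ lies strictly inside $I$. Then each translate $I \oplus tq$ also fits within one row (a later row), and the union of all translates is exactly one axis-aligned rectangle: its columns span the positions of $\ell$ and $r$ within a row, and its rows span the $a+1$ consecutive rows containing $I, I\oplus q, \dots, I\oplus aq$. That is a single orthogonal rectangle. If $I$ straddles a row boundary of $\G_q$ — i.e.\ $I$ spans two consecutive rows (which is the only possibility once $|I|\le q$, and more rows if $|I|>q$) — then $I$ decomposes into $\Oh(1)$ row-fragments, and each fragment generates its own rectangle under the chain of translations, giving $\Oh(1)$ rectangles total. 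The final step is to bound the number of row-fragments of $I$ by $\Oh(1)$: if $|I|\le q$ then $I$ meets at most two rows; if $|I|>q$ then the middle of $I$ covers full rows and contributes one large rectangle, with at most two partial rows at the ends. In all cases the decomposition uses $\Oh(1)$ rectangles, and each rectangle's corner coordinates are simple arithmetic expressions in $\ell$, $r$, $q$, and $a$ (using $\lfloor \cdot/q\rfloor$ and $\bmod\,q$ to convert a linear index to a grid cell), computable in $\Oh(1)$ time.

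The reason this works is that translation by $q$ acts as a pure vertical shift of $\G_q$ by one row, preserving column positions exactly; so the horizontal profile of the chain is determined entirely by the horizontal profile of the single interval $I$, and the vertical extent is simply a band of $a+1$ rows (suitably offset per fragment). The main obstacle I anticipate is the bookkeeping at the wrap-around boundaries: converting the integer endpoints $\ell$ and $r$ into grid coordinates via the row index $\lfloor (\cdot - 1)/q\rfloor$ and column index $((\cdot - 1)\bmod q)+1$, and then correctly describing the at-most-two partial row-fragments together with the full-width block when $|I|>q$, so that the resulting rectangles are genuinely disjoint and collectively cover exactly $\Chain_q(I,a)$. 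None of this is deep, but care is needed to get the off-by-one offsets and the per-fragment vertical ranges right; since there are only a constant number of fragments and each contributes one rectangle with $\Oh(1)$-computable corners, the total time is $\Oh(1)$, as claimed.
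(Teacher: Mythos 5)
Your proposal is correct. Note, however, that the paper itself gives no proof of this lemma at all: it is imported by citation from Kociumaka et al.\ (LATA 2019), so there is no internal argument to compare against. Your elementary proof is the natural one and it works: since adding $q$ is exactly a one-row vertical shift in $\G_q$ (column $((x-1)\bmod q)+1$ is invariant, row $\lfloor (x-1)/q\rfloor+1$ increases by one), the footprint of the chain is the union of $a+1$ vertically stacked copies of the footprint of $I$, and the case analysis on $|I|$ versus $q$ (one row-fragment, two row-fragments, or partial--full--partial) yields at most three rectangles whose corners are given by constant-time div/mod arithmetic. One small overstatement to be aware of: in the case $|I|>q$ the three per-fragment rectangles you build are \emph{not} disjoint in general (the two height-$(a+1)$ end rectangles intersect the full-width middle rectangle once $a\ge 1$), so you should not promise disjointness; but this is harmless, since the lemma only asserts that $\Chain_q(I,a)$ is a \emph{union} of $\Oh(1)$ rectangles, and the downstream use in Lemma~\ref{lem:union_chains} (a 2D difference-array coverage count) tolerates overlapping rectangles. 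Alternatively, one can restore disjointness by trimming the end rectangles to the rows outside the middle block, also in $\Oh(1)$ time. With that caveat, your argument is a complete and self-contained substitute for the external citation.
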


Lemma~\ref{lem:union_chains} can be used to compute a union of interval chains.

\begin{lemma}\label{lem:union_chains}
  Given $c$ interval chains, all of which have difference $q$ and are subsets of $[0 \dd n]$, the union of these chains, expressed as a subset of $[0 \dd n]$, can be computed in $\Oh(n+c)$ time.
\end{lemma}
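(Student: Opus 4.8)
The plan is to exploit the grid representation from \cref{lem:chain_rect}: each of the $c$ interval chains, all sharing the common difference $q$, becomes a union of $\Oh(1)$ orthogonal rectangles in the grid $\G_q$, whose coordinates we can read off in $\Oh(1)$ time per chain. Thus in $\Oh(c)$ time I would transform the input into a collection of $\Oh(c)$ axis-aligned rectangles living in $\G_q$, and the task reduces to computing the union of these rectangles as a set of grid cells, then translating that set back to a subset of $[0\dd n]$.

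The grid $\G_q$ has width $q$ and height $\Oh(n/q+1)$, so it contains $\Oh(n+q)$ cells; since each chain lies inside $[0\dd n]$ we may assume $q\le n$ and the grid has $\Oh(n)$ cells. First I would compute, for each rectangle, its span of full rows and handle the (at most two) partial boundary rows separately, but a cleaner route is a sweep: sort the $\Oh(c)$ horizontal edges of the rectangles by row coordinate, which takes $\Oh(n+c)$ time using a bucket/counting sort over the $\Oh(n/q)$ row indices (or $\Oh(c\log c)$ by comparison sort, absorbed into $\Oh(n+c)$). Sweeping top to bottom while maintaining, for each of the $q$ columns, a count of how many rectangles currently cover that column lets me mark which columns are active in each row; the union of the chains is then exactly the set of (row, column) cells that are covered by at least one rectangle.

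To keep the running time at $\Oh(n+c)$ rather than $\Oh((n/q)\cdot q)=\Oh(n)$ with a large constant per row, I would avoid rescanning all $q$ columns at every row. The key observation is that the total number of column-activation and column-deactivation events across all rows is $\Oh(c)$, since each rectangle contributes $\Oh(1)$ such events. I maintain the set of active columns as a collection of maximal intervals, updating it in $\Oh(1)$ amortized time per event, and for each row whose active-column set is nonempty I emit the corresponding cells, translated via the grid bijection back into $[0\dd n]$. Because the output itself is a subset of $[0\dd n]$, the total emission cost is $\Oh(n)$, and the bookkeeping is $\Oh(c)$, giving the claimed $\Oh(n+c)$ bound.

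The main obstacle I anticipate is the careful handling of the partial first and last rows of each chain's rectangles in \cref{lem:chain_rect} and, more subtly, ensuring the amortized update of the active-column intervals is genuinely $\Oh(1)$ per event rather than hiding a logarithmic factor; this is resolved by noting that we only ever need the \emph{union} of active columns (not their multiplicities split by column), so maintaining a simple coverage count per column endpoint and merging adjacent active runs lazily during emission suffices. The wrap-around between consecutive rows of $\G_q$ (a chain cell $i$ in the last column of one row is adjacent in $[0\dd n]$ to cell $i+1$ in the first column of the next row) must be respected when converting back, but since we emit cells row by row in grid order this conversion is a direct index arithmetic and introduces no additional cost.
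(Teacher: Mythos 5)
Your proposal is correct and follows essentially the same route as the paper: both apply Lemma~\ref{lem:chain_rect} to turn the $c$ chains into $\Oh(c)$ axis-aligned rectangles in $\G_q$, and then compute coverage counts over the $\Oh(n)$-cell grid to obtain the union. Your row-by-row sweep with a per-column difference array is just the paper's 2D difference-array-plus-prefix-sums computation organized incrementally, and the extra machinery for maintaining active-column intervals in $\Oh(1)$ amortized time is unnecessary, since (as you yourself note) rescanning all $q$ columns in each of the $\Oh(n/q+1)$ rows already costs only $\Oh(n)$ in total.
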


\begin{proof}
  By Lemma~\ref{lem:chain_rect}, the problem reduces to computing the union of $\Oh(c)$ rectangles on a grid of total size $n$.
  Let $t$ be a 2D array of the same shape as $\G_q$, initially set to zeroes.
  For a rectangle with opposite corners $(x_1,y_1)$ and $(x_2,y_2)$, with $x_1 \le x_2$ and $y_1 \le y_2$, we increment $t[x_1,y_1]$,
  decrement $t[x_2+1,y_1]$ and $t[x_1,y_2+1]$, and increment $t[x_2+1,y_2+1]$ (provided that the respective cells are within the array).
  This takes $\Oh(c)$ time.
  We then compute prefix sums of $t$, which are defined as
  \[t'[x,y]=\sum_{i=1}^x \sum_{j=1}^y t[i,j].\]
  Such values can be computed in time proportional to the size of the grid, i.e., in $\Oh(n)$ time.
  Finally, we note that $(x,y)$ is contained in $t'[x,y]$ rectangles, concluding the proof.
\end{proof}

\begin{remark}
The proof of~\cref{lem:union_chains} is essentially based on an idea that was used, for example, 
for reducing the decision version of range stabbing queries in 2D to weighted range counting queries in 2D (cf.~\cite{DBLP:journals/siamcomp/Patrascu11}).
\end{remark}

We will also use the following auxiliary lemma.

\begin{lemma}\label{lem:3chains}
  Let $X$ and $Z$ be intervals and $q$ be a positive integer.
  The set \[Z':=\{z \in Z\,:\, \exists_{x\in X}\, z \equiv x \pmod q\},\] represented as a disjoint sum of at most three interval chains, each with difference $q$,  can be computed in $\Oh(1)$ time.
\end{lemma}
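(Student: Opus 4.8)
The plan is to first understand the set of \emph{all} integers (over $\mathbb{Z}$) that are congruent modulo $q$ to some element of $X$, and then intersect this set with the interval $Z$. Write $X=[x_0\dd x_1]$ and let $d=|X|=x_1-x_0+1$. If $d\ge q$, then $X$ hits every residue class, so the target set is all of $\mathbb{Z}$ and $Z'=Z$, a single interval chain $\Chain_q(Z,0)$; I handle this as a trivial special case. Otherwise $d<q$, and I claim that the integers congruent to some $x\in X$ form a single bi-infinite interval chain with difference $q$ and base interval $B=[b_0\dd b_0+d-1]$, where $b_0=x_0\bmod q$. The point to verify is that even when the residues of $X$ wrap around $q$, they ``unwrap'' into one contiguous block of length $d$ rather than two: the two residue blocks $[b_0\dd q-1]$ and $[0\dd (x_1\bmod q)]$ correspond to consecutive small intervals of the chain, which are adjacent (one ends at $jq+q-1$, the next starts at $(j+1)q$) and hence merge into $[b_0\dd b_0+d-1]\oplus jq$.

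Having established that the congruent integers form $\mathcal{C}:=\bigcup_{j\in\mathbb{Z}}(B\oplus jq)$, I would compute $Z'=Z\cap\mathcal{C}$ as the intersection of an interval with a chain. Since $d\le q$, consecutive small intervals $B\oplus jq$ are pairwise disjoint, so this intersection naturally splits into three disjoint parts: the small intervals $B\oplus jq$ that lie \emph{entirely} inside $Z$, which form one interval chain; at most one small interval straddling the left endpoint $z_0$, contributing a single truncated interval; and at most one straddling the right endpoint $z_1$, contributing another single truncated interval. Each of the two truncated pieces is itself an interval chain with parameter $a=0$, so $Z'$ is a disjoint union of at most three interval chains of difference $q$, as required.

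The actual computation is purely arithmetic: from $b_0$ and $d$ I determine the smallest and largest shifts $j$ for which $B\oplus jq$ meets $Z$, identify the contiguous range of $j$ for which $B\oplus jq\subseteq Z$ (this range's endpoints give the middle chain directly), and clip the extremal intervals against $z_0$ and $z_1$ to obtain the two boundary chains; all of this is $\Oh(1)$ time. I expect the main obstacle to be the careful case analysis around the boundaries and the $d\ge q$ degenerate case: I must check that the ``middle'' range of shifts can be empty (then $Z$ meets at most two, possibly a single, small interval and fewer chains are produced), that the left and right truncated intervals can coincide or vanish, and that the $d=q$ borderline (where small intervals become adjacent and $\mathcal{C}=\mathbb{Z}$) is absorbed into the $Z'=Z$ case. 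None of these require more than constant-time arithmetic, and they only ever decrease the number of chains below three.
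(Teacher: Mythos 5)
Your proof is correct and takes essentially the same approach as the paper's: treat $|X|\ge q$ as the trivial case $Z'=Z$, and otherwise observe that $Z'$ decomposes into disjoint translates of a length-$|X|$ block, where the fully contained middle blocks form a single chain with difference $q$ and the (at most two) truncated boundary blocks are trivial chains. The only difference is expository --- you spell out the residue-unwrapping argument and the constant-time index arithmetic that the paper's terse proof leaves implicit.
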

\begin{proof}
 If $|X|\ge q$, then $Z'=Z$ is an interval and thus an interval chain.
 If $|X|<q$, then $Z'$ can be divided into disjoint intervals of length smaller than or equal to $|X|$. The intervals from the second until the penultimate one (if any such exist),
 have length $|X|$.
 Hence, they can be represented as a single chain, as the first element of each such interval is equal $\bmod \; q$ to the first element of $X$.
 The two remaining intervals can be treated as chains as well.
\end{proof}

\subsection{The Aligned-Light-Sum Problem}
We define the following abstract problem that resembles the \textsc{Light-Fragments} problem from Section~\ref{sec:nk}.

\defproblem{Aligned-Light-Sum Problem}
{Positive integers $m$, $k$, $q$ and strings $U,V$ over alphabet $\{0,1\}$, each containing $\Oh(k)$ non-zero characters.
The strings are specified by their positions with non-zero characters.
}
{The set $A\;=\; \{\,i\ :\  (\exists\, j) \; ||U_{(i)}||\,+\,||V_{(j)}||\,\le\, k\ \land\ j \equiv i \pmod{q}\}$.
}

\begin{figure}[thpb]
  \centering
  \begin{tikzpicture}
  \tikzstyle{red}=[color=red!90!black]
  \tikzstyle{darkred}=[color=red!50!black]
  \tikzstyle{blue}=[color=blue!50!black]
  \tikzstyle{black}=[color=black]
  \definecolor{turq}{RGB}{74,223,208}
  \definecolor{pink}{RGB}{254,193,203}
  \definecolor{pink-orange}{rgb}{1.0, 0.6, 0.4}

  
  \begin{scope}
  \node at (-0.3,0) [left, above] {$U=$};
  \foreach \c [count=\i from 0] in {0,1,0,0,0,0,0,0,0,0,0,0,0,0,0,0} {
    \node at (\i * 0.3 + 0.3, 0) [above] {\tt \c};
  }
  \end{scope}
  
  
  \begin{scope}[yshift=-0.5cm]
  \draw [fill=pink-orange] (0.15, -0.21) rectangle (0.75, 0.05);
  \draw [fill=pink-orange] (1.35, -0.21) rectangle (1.95, 0.05);
  \draw [fill=pink-orange] (2.55, -0.21) rectangle (3.15, 0.05);
  \draw [fill=pink-orange] (3.75, -0.21) rectangle (4.35, 0.05);
  \node at (-0.3,0) [left, above] {$V=$};
  \foreach \c [count=\i from 0] in {0,0,0,0,0,0,0,0,0,0,0,0,0,0,1,0,0,0,0,0,0,0,0,0,0,0,0,0,0} {
    \node at (\i * 0.3 + 0.3, 0) [above] {\tt \c};
    \node at (\i * 0.3 + 0.3, 0.1) [below] {\tiny \i};
  }

  \end{scope}

\end{tikzpicture}
  \caption{An instance of the \textsc{Aligned-Light-Sum} problem with $m=15$, $k=2$, $q=4$, $U=\texttt{0} \texttt{1} \texttt{0}^{14}$ and $V= \texttt{0}^{14} \texttt{1} \texttt{0}^{14}$. 
	The output is the interval chain $\Chain_4([0,1],3)$ shown in orange.}\label{fig:int_chain}
\end{figure}

\begin{lemma}\label{Oct31}
The \textsc{Aligned-Light-Sum} problem can be solved in $\Oh(k^2)$ time with the output represented as a collection of $\Oh(k^2)$ interval chains, each with difference $q$.
\end{lemma}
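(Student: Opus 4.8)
The plan is to reduce the existential condition to $\Oh(k^2)$ residue-matching subproblems, each solved by \cref{lem:3chains}. Observe first that membership $i \in A$ depends on $i$ only through the value $u := ||U_{(i)}||$: indeed $i \in A$ if and only if there is some $j \equiv i \pmod q$ with $||V_{(j)}|| \le k - u$. The two relevant objects are therefore the window-sum functions $i \mapsto ||U_{(i)}||$ and $j \mapsto ||V_{(j)}||$. As in the proof of \cref{Nov6}, each $1$ of $U$ lies inside the window $U_{(i)}$ exactly for $i$ ranging over a single interval, so $i \mapsto ||U_{(i)}||$ is piecewise constant with $\Oh(k)$ breakpoints; sorting these breakpoints and sweeping yields $\Oh(k)$ maximal intervals (\emph{pieces}), on each of which $||U_{(i)}||$ is constant. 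I compute the analogous $\Oh(k)$ pieces for $V$, and discard every $U$-piece whose value exceeds $k$ (such positions can never lie in $A$).

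Next I perform the matching. For every $U$-piece $P_U$ with constant value $u \le k$, and every $V$-piece $P_V$ whose constant value is at most $k-u$, I apply \cref{lem:3chains} with $P_V$ in the role of $X$ and $P_U$ in the role of $Z$. This returns, in $\Oh(1)$ time and as at most three interval chains with difference $q$, the set $\{i \in P_U : \exists\, j \in P_V,\ i \equiv j \pmod q\}$. I output the union of all chains produced over all such pairs. Correctness follows because, writing $B_t := \{j : ||V_{(j)}|| \le t\}$, a position $i$ with $||U_{(i)}|| = u$ lies in $A$ exactly when $i$ is congruent modulo $q$ to some $j \in B_{k-u}$, and $B_{k-u}$ is precisely the union of the $V$-pieces of value at most $k-u$; thus the pairs considered for the piece $P_U$ containing $i$ cover exactly the condition defining $A$.

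For the complexity, there are $\Oh(k)$ $U$-pieces and $\Oh(k)$ $V$-pieces, so at most $\Oh(k^2)$ pairs are examined, each in $\Oh(1)$ time and contributing $\Oh(1)$ chains; together with the $\Oh(k \log k)$ cost of building and sorting the piecewise-constant functions, the total running time is $\Oh(k^2)$ and the output is a collection of $\Oh(k^2)$ interval chains of difference $q$. Note that the chains may overlap, but this is permitted since the required output is merely a representation of the set $A$ as a union of chains.

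The step I expect to be the main obstacle is recognizing that the problem collapses to $\Oh(k^2)$ independent applications of \cref{lem:3chains}: one must see that (i) only the two window-sum values matter and both functions are piecewise constant with $\Oh(k)$ pieces, and (ii) the existential ``$\exists\, j \equiv i \pmod q$'' is exactly a residue match between two intervals. The quantitative care needed is to pair a $U$-piece of value $u$ only with $V$-pieces of value at most $k-u$, which is what keeps the number of pairs --- and hence the number of output chains --- at $\Oh(k^2)$ rather than larger.
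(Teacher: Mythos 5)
Your proof is correct and follows essentially the same route as the paper's: the paper likewise partitions the position ranges of $U$ and $V$ into $\Oh(k)$ intervals on which the window content (and hence the window sum) is constant, pairs up intervals whose constant values sum to at most $k$, and invokes Lemma~\ref{lem:3chains} on each such pair to produce $\Oh(1)$ interval chains with difference $q$, for $\Oh(k^2)$ chains and $\Oh(k^2)$ time overall. The only cosmetic difference is that the paper's partition is by equality of the sets $\W_j \cap I$ rather than of their cardinalities, which changes nothing in the argument.
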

\begin{proof}
  Let $I$ and $I'$ be the positions with non-zero characters in $U'$ and $V'$, respectively.
	We partition the set $\{0,\ldots,|U'|-m\}$ into intervals such that for all $j$ in an interval, the set $\W_j \cap I$ is the same.
  For this, we use a sliding window approach.
  We generate events corresponding to $x$ and $x-m+1$ for all $x \in I$ and sort them.
  When $j$ crosses an event, the set $\W_j \cap I$ changes.
  Thus we obtain a partition of $\{0,\ldots,|U'|-m\}$ into intervals $Z_1,\ldots,Z_{n_1}$.
  We obtain a similar partition of $\{0,\ldots,|V'|-m\}$ into intervals $Z'_1,\ldots,Z'_{n_2}$.
  We have $n_1,n_2 = \Oh(k)$.

  Let us now fix $Z_j$ and $Z'_{j'}$. 
  First we check if the condition on the total number of non-zero characters is satisfied for any $z \in Z_j$ and $z' \in Z'_{j'}$.
  If so, we compute the set $Z'_{j'} \bmod q =\{z' \bmod q\,:\,z' \in Z'_{j'}\}$.
  It is a single circular interval and can be computed in constant time.
  The required result is 
  \[\{z \in Z_j\,:\, z \bmod q \in X\}.\]

\smallskip\noindent 
	By Lemma~\ref{lem:3chains}, this set can be represented as a union of 
	three chains, each with difference $q$ and, as such, can be computed in $\Oh(1)$ time.
  The conclusion follows.
\end{proof}

\section{An $\cO(n+{k^5})$-time Algorithm for Short Texts}

In this section we proceed by assuming that $m \le n\le 2m$ and aim at an $\cO(n+k^5)$-time algorithm. In the next sections we 
remove this assumption and reduce the exponent of $k$ to 4.

A \emph{(deterministic) sample} is a short fragment $\sample$ of the pattern $P$.
An occurrence in the text without any mismatch is called {\it exact}.
We introduce a problem of \textsc{Sample-Match} that consists in finding all $k$-occurrences of $P$ in $T$ such that $\sample$
matches exactly a fragment of length $|\sample|$ in $T$.

We split the pattern $P$ into $2k+3$ fragments of length $\floor{\frac{m}{2k+3}}$ or $\ceil{\frac{m}{2k+3}}$ each.
In any $k$-occurrence of $P$ in $T$ at least $k+2$ of those fragments will occur exactly in $T$ (up to $k$ fragments may occur with one mismatch and at most one fragment will contain the split point).

\begin{remark}
We force at least $k+2$ fragments (instead of just one) to match exactly for two reasons: (1) in order to have more than a half of them match exactly, which will guarantee that the interval chains that are obtained from applications of the \textsc{Aligned-Light-Match} problem have the same difference and thus can be unioned using Lemma~\ref{lem:union_chains} (see the proof of Proposition~\ref{prop:nk5}); and (2) for the marking trick in the next section.
\end{remark}

Let us henceforth fix a sample $\sample$ as one of these fragments, let $p_{\sample}$ be its starting position in $P$, and let $m_{\sample} = |\sample|$.
We assume that the split point $x$ in $P$ is to the right of $\sample$, i.e., that $x \ge p_{\sample}+m_{\sample}$.
The opposite case---that $x < p_{\sample}$---can be handled analogously.

\subsection{Matching Non-Periodic Samples}
Let us assume that $\sample$ is non-periodic. 
Further let $j$ denote the starting position of $\sample$ in $P$ and $i$ denote a starting position of an occurrence of $\sample$ in $T$.
We introduce a problem in which, intuitively, we compute all $k$-occurrences of $P$ in $T$ which align $T[i]$ with $P[j]$.

\defproblem{Pair-Match Problem}
{
  Text $T$ of length $n$, pattern $P$ of length $m$, positive integer $k$, and two integers $i \in [0\dd n-1]$ and $j \in [0\dd m-1]$.
}
{
  The set $A(i,j)$ of all positions in $T$ where we have a $k$-mismatch occurrence of $\rot_x(P)$ for some $x$ such that $(i,j)$ is a matching pair.
}

We show how to solve the \textsc{Sample-Match} problem for a non-periodic sample $\sample$ in $\Oh(k^2)$ time in two steps.
First, we show how to solve the \textsc{Pair-Match} problem for a given $(i,j)$ in $\Oh(k)$ time.
We then apply this solution for each occurrence $i$ of $\sample$ in $T$ by noticing that we can have $\Oh(k)$ such occurrences in total.


\begin{figure}[ht]

    \centering
    \vspace*{-1cm}
    \includegraphics[width=.45\textwidth]{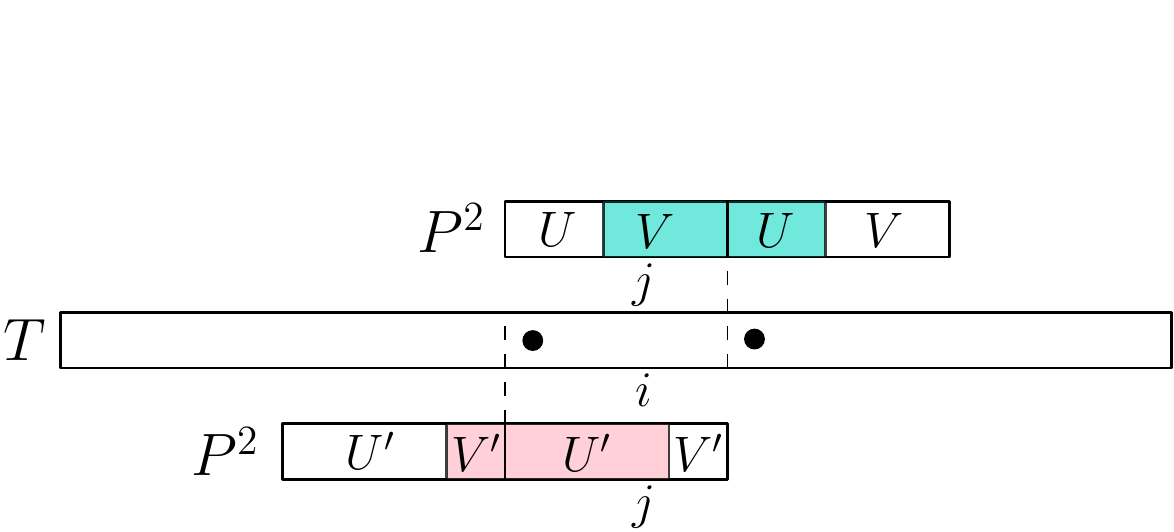}
	\caption{The two possible anchors for the matching pair of positions $(i,j)$ are shown as bullet points. 
	A possible $k$-occurrence of $P$ in $T$ corresponding to the left (resp.~right) anchor is shown below $T$ (above $T$, resp.). Note that $P^2=PP$.}\label{fig:2anchors}
\end{figure}

\begin{lemma}\label{lem:mpp}
After $\cO(n)$-time preprocessing, the \textsc{Pair-Match} problem can be solved in $\Oh(k)$ time, where the output is represented as a union of $\Oh(k)$ intervals.
\end{lemma}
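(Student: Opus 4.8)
The plan is to reduce \textsc{Pair-Match} to two instances of the \textsc{Anchor-Match} problem, exploiting the fact that a single matching pair pins the anchor down to at most two possible values. The $\Oh(n)$-time preprocessing is exactly that of Lemma~\ref{lem:alg-anchor-match-problem} (the $\lcp$-with-mismatches data structure on $P\#T$), which I would reuse unchanged; all the work of a query will sit inside two \textsc{Anchor-Match} calls plus some interval arithmetic.

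First I would determine the candidate anchors. Suppose $p$ is a $k$-occurrence with rotation $x$ such that $(i,j)\in M(p,x)$; by definition $j\equiv i-p+x \pmod m$, and the anchor of this occurrence is $\mathbf{a}=p+((m-x)\bmod m)$. Eliminating $x$ gives $\mathbf{a}\equiv i-j\pmod m$, so every anchor we seek is congruent to $i-j$ modulo $m$. Since the anchor always lies in the window $\W_p$, which has length $m$ and contains $i$, there are at most two admissible values: $\mathbf{a}_1=i-j$ and $\mathbf{a}_2=i-j+m$ (discarding any that falls outside $[0\dd n-1]$). These are precisely the two anchors depicted in Fig.~\ref{fig:2anchors}: one places the split point to the right of position $j$ in $P$ (so $P[j]$ belongs to the wrapped prefix $U$), the other to its left (so $P[j]$ belongs to $V$).

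Next I would run the algorithm of Lemma~\ref{lem:alg-anchor-match-problem} for each of $\mathbf{a}_1$ and $\mathbf{a}_2$, obtaining two unions of $\Oh(k)$ intervals of $k$-occurrences. The only point that needs care is that \textsc{Anchor-Match} returns \emph{all} occurrences with the given anchor, whereas here I must keep only those for which $(i,j)$ is genuinely a matching pair. I would check that, once the anchor is fixed to $\mathbf{a}_1$ (resp.\ $\mathbf{a}_2$), the rotation is forced to equal $x\equiv p+m-\mathbf{a}\pmod m$, whence $i-p+x\equiv i-\mathbf{a}\equiv j\pmod m$ automatically; thus the congruence part of $(i,j)\in M(p,x)$ is free, and the sole remaining requirement is $i\in\W_p$, i.e.\ $p\in[i-m+1\dd i]$. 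Intersecting each \textsc{Anchor-Match} output with this interval (an $\Oh(k)$-time operation on $\Oh(k)$ intervals) yields exactly the contribution of that anchor, and the union of the two gives $A(i,j)$ as $\Oh(k)$ intervals in $\Oh(k)$ time.

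The hard part is the correctness bookkeeping of the previous paragraph: arguing that the two anchors capture every solution (no occurrence is missed), that the window constraint $i\in\W_p$ is the \emph{only} extra condition beyond having the correct anchor, and that the two contributions do not double-count. In particular I would assign the boundary case $x=0$ (the unrotated pattern, giving $p=i-j$) to a single anchor, namely $\mathbf{a}_1$, so that it is reported once. Everything else—the two $\lcp$-with-mismatches computations and the interval intersections—is routine and clearly fits within the claimed $\Oh(k)$ bound.
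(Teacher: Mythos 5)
Your proposal is correct and follows essentially the same route as the paper's proof: reduce \textsc{Pair-Match} to at most two \textsc{Anchor-Match} queries at the candidate anchors $i-j$ and $i+m-j$ (discarding out-of-range ones), then intersect each answer with $[i-m+1 \dd i]$ to force the occurrence to contain position $i$. Your additional bookkeeping (the congruence argument showing the anchor determines the rotation, and the worry about double-counting) is sound but not strictly needed, since the output is a set of positions and taking a union of intervals already handles any overlap.
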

\begin{proof}
  Recall that the \textsc{Anchor-Match} problem returns all $k$-occurrences of $P$ in $T$ with a given anchor.
  The \textsc{Pair-Match} problem can be essentially reduced to the \textsc{Anchor-Match} problem,
  since for a given matching pair of characters in $P$ and $T$, there
  are at most two ways of choosing the anchor depending on the relation between $j$ and a split point:
  these are $i-j$ and $i+|P|-j$ (see~\cref{fig:2anchors}).
  Clearly, we choose $i-j$ as an anchor only if $i-j \ge 0$ and $i+|P|-j$ only if $i+|P|-j < |T|$.
  We then have to take the intersection of the answer with $[i-m+1\dd i]$ to ensure that the $k$-occurrence contains position $i$.
\end{proof}

\begin{lemma}\label{lem:nonper}
After $\Oh(n)$-time preprocessing, the \textsc{Sample-Match} problem for a non-periodic sample $\sample$ can be solved in $\Oh(k^2)$ time and outputs a union of $\Oh(k^2)$ intervals
of occurrences.
\end{lemma}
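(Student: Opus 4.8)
The plan is to reduce the \textsc{Sample-Match} problem for a non-periodic $\sample$ to $\Oh(k)$ instances of the \textsc{Pair-Match} problem. In the $\Oh(n)$-time preprocessing I would build, in addition to the structures needed by \cref{lem:mpp}, the IPM data structure over $P\#T$, so that all occurrences of the fragment $\sample$ of $P$ inside $T$ can be reported. A single application of the IPM tool lists these occurrences as $\Oh(n/m_{\sample})=\Oh(k)$ arithmetic sequences in $\Oh(k)$ time. The crucial observation is that there are only $\Oh(k)$ occurrences in total: any two distinct occurrences of $\sample$ differ by at least $\per(\sample)$ (overlapping occurrences witness a period equal to their distance, while non-overlapping ones differ by at least $m_{\sample}\ge\per(\sample)$), and since $\sample$ is non-periodic we have $\per(\sample)>m_{\sample}/2$. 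Hence the number of occurrences in $T$ is at most $n/\per(\sample)+1 < 2n/m_{\sample}+1=\Oh(k)$, using $n\le 2m$ and $m_{\sample}=\Theta(m/k)$. I would therefore expand the sequences into the explicit list of occurrences $i_1,\dots,i_\ell$ with $\ell=\Oh(k)$ in $\Oh(k)$ time.

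Next I would argue that each exact occurrence of $\sample$ yields a matching pair of the form $(i,p_{\sample})$. Suppose $P$ has a $k$-occurrence at position $p$ with rotation $x$, that is, $\rot_x(P)=VU$ with $P=UV$ and $|U|=x$, in which $\sample$ matches exactly. Since we assumed $x\ge p_{\sample}+m_{\sample}$, the sample lies entirely inside $U$ and thus appears in $VU$ at offset $(m-x)+p_{\sample}$; consequently it is aligned with the text fragment starting at $i=p+(m-x)+p_{\sample}$, which is an exact occurrence of $\sample$. A direct computation gives $(i-p+x)\bmod m=p_{\sample}$, so $(i,p_{\sample})\in M(p,x)$ is indeed a matching pair of this occurrence. (The symmetric case $x<p_{\sample}$ is handled analogously, as already noted.)

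I would then run \textsc{Pair-Match} for each pair $(i_t,p_{\sample})$, $t=1,\dots,\ell$. By \cref{lem:mpp}, each call runs in $\Oh(k)$ time and returns a union of $\Oh(k)$ intervals, so over all $\Oh(k)$ calls the running time is $\Oh(k^2)$ and the output is a union of $\Oh(k^2)$ intervals, as claimed. Correctness follows from the two directions: every sought $k$-occurrence (in which $\sample$ matches exactly) aligns $\sample$ with some computed occurrence $i_t$, and hence is reported by the corresponding call, while \textsc{Pair-Match}$(i_t,p_{\sample})$ reports only genuine $k$-occurrences admitting $(i_t,p_{\sample})$ as a matching pair, so no spurious positions arise; occurrences reported by several calls are harmless since the final output is a union of intervals.

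The step I expect to be the main obstacle is the occurrence-count bound: the whole budget hinges on a non-periodic $\sample$ occurring only $\Oh(k)$ times, which is exactly what fails for periodic samples, where the occurrences can be $\Theta(m)$ and must instead be grouped into arithmetic progressions and handled by the technique of Bringmann et al.\ in the later sections. The remaining bookkeeping---verifying that an exact match of $\sample$ produces the matching pair $(i,p_{\sample})$ despite the wrap-around in the rotation---is routine but should be carried out carefully.
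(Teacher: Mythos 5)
Your proof is correct and takes essentially the same route as the paper: use IPM queries on $P\#T$ to list the $\Oh(k)$ occurrences of the non-periodic sample $\sample$ in $T$, then reduce to one \textsc{Pair-Match} instance $(i,p_{\sample})$ per occurrence and invoke \cref{lem:mpp}. The details you add—the occurrence-count bound via $\per(\sample)>m_{\sample}/2$ together with $n\le 2m$, and the check that an exact match of $\sample$ yields the matching pair $(i,p_{\sample})$—are exactly the facts the paper's shorter proof leaves implicit.
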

\begin{proof}
Since $\sample$ is non-periodic, it has $\cO(k)$ occurrences in $T$, which can be computed in $\cO(k)$ time after an $\cO(n)$-time preprocessing
using IPM queries~\cite{DBLP:conf/soda/KociumakaRRW15,Kocium} in $P\#T$.
Let $j$ be the starting position of $\sample$ in $P$ and $i$ be a starting position of an occurrence of $\sample$ in $T$.
For each of the $\cO(k)$ such pairs $(i,j)$, the computation reduces to the \textsc{Pair-Match} problem for $i$ and $j$. The statement follows by Lemma~\ref{lem:mpp}.
\end{proof}

\subsection{Matching Periodic Samples}

Let us assume that $\sample$ is periodic, i.e., it has a period $q$ with $2q \leq |\sample|$.
A fragment of a string $S$ containing an inclusion-maximal arithmetic sequence of occurrences of $\sample$ in $S$ with difference $q$ is called here an \emph{$\sample$-run}.
If $\sample$ matches a fragment in the text, then the match belongs to an $\sample$-run. For example, the underlined 
fragment of $S=\texttt{bb\underline{ababab}aa}$ is an $\sample$-run for $\sample=\texttt{abab}$.

\begin{lemma}\label{lem:P1_runs}
  If $\sample$ is periodic, the number of $\sample$-runs in the text is $\Oh(k)$ and they can all be computed in $\Oh(k)$ time
  after $\Oh(n)$-time preprocessing.
\end{lemma}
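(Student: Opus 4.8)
The plan is to reduce everything to the Internal Pattern Matching (IPM) machinery together with the periodicity lemma. Set $q = \per(\sample)$; since $\sample$ is periodic we have $2q \le |\sample|$. First I would build the IPM data structure for $P\#T$ in $\Oh(n)$ time and then query it for the occurrences of the fragment $\sample$ inside $T$. Because $|\sample| = \Theta(m/k)$ and we are in the short-text regime $n \le 2m$, the IPM tool reports all occurrences of $\sample$ in $T$ as a union of $\Oh(n/|\sample|) = \Oh(k)$ pairwise disjoint arithmetic sequences with common difference $\per(\sample)=q$, and does so in $\Oh(n/|\sample|)=\Oh(k)$ time.

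Next I would argue that these arithmetic sequences coincide, after a trivial left-to-right merge, with the $\sample$-runs. The key structural fact, which I would establish using Fine and Wilf's periodicity lemma, is that occurrences of $\sample$ cannot have ``holes'': if $\sample$ occurs at positions $i$ and $i+tq$ of $T$ with $tq < |\sample|$, then it also occurs at every position $i+sq$ for $0 \le s \le t$. For $t=2$ this is a one-line check, comparing $\sample[c]$ against $\sample[c-q]$ and $\sample[c+q]$ on the ranges $[0\dd|\sample|-q)$ and $[q\dd|\sample|)$ whose union (using $2q\le|\sample|$) is all of $[0\dd|\sample|)$, and the general case follows by iteration. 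Consequently every inclusion-maximal arithmetic sequence of occurrences with difference $q$ is a genuine run $\{i, i+q, \dots, i+tq\}$ with no gaps, and two distinct runs are separated by a gap of at least $|\sample|$ between the last occurrence of one and the first occurrence of the next. Hence the spanning fragments of distinct runs are pairwise disjoint, each of length at least $|\sample|$, which independently confirms that there are at most $n/|\sample| = \Oh(k)$ runs.

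To produce the runs, I would scan the $\Oh(k)$ arithmetic sequences returned by IPM in order of starting position (they are already produced left to right) and merge any two whose adjacent endpoints are at distance exactly $q$, so that each resulting maximal sequence is reported as the fragment $[\,\mathrm{first}\dd \mathrm{last}+|\sample|-1\,]$ of $T$. This scan takes $\Oh(k)$ time, and together with the $\Oh(n)$-time construction of the IPM structure it yields the claimed bounds.

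The main obstacle I anticipate is the bookkeeping in the merging step and in reconciling the decomposition produced by IPM (an arbitrary partition into disjoint arithmetic sequences of difference $q$) with the inclusion-maximal runs demanded by the definition; the ``no holes'' fact is exactly what guarantees that this reconciliation is a purely local, constant-work-per-sequence operation that can neither split nor fragment a run.
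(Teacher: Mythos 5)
Your proposal is correct in its essentials and follows the same route as the paper: construct the IPM data structure over $P\#T$, retrieve all occurrences of $\sample$ as $\Oh(k)$ disjoint arithmetic sequences with difference $\per(\sample)$, and merge consecutive sequences in a single left-to-right scan (the paper's proof is exactly this, stated more tersely, and your ``no holes'' observation is a reasonable justification of why the merge is purely local). One caveat on the side remark you yourself flag as redundant: it is not true that distinct $\sample$-runs are separated by a gap of at least $|\sample|$, nor that their spanning fragments are disjoint---e.g.\ for $\sample=\texttt{abaaba}$ (period $3$) and $T=\texttt{abaababaaba}$ there are single-occurrence runs at positions $0$ and $5$, whose spanning fragments overlap---but the $\Oh(k)$ bound holds regardless, since merging can only decrease the number of the $\Oh(k)$ sequences returned by IPM.
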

\begin{proof}
  We construct the data structure for IPM queries on $P \# T$.
  This allows us to compute the set of all occurrences of $\sample$ in $T$ as a collection of $\Oh(k)$ arithmetic sequences with difference $\per(\sample)$.
  We then check for every two consecutive sequences if they can be joined together.
  This takes $\Oh(k)$ time and results in $\Oh(k)$ $\sample$-runs.
\end{proof}

\noindent
For two equal-length strings $S$ and $S'$, we denote the set of their \emph{mismatches} by
\[\Mis(S,S') = \{i=0,\ldots,|S|-1\,:\,S[i] \ne S'[i]\}.\]
We say that position $a$ in $S$ is a \emph{misperiod} with respect to the fragment $S[i \dd j]$ if $S[a] \ne S[b]$ where $b$ is the unique position such that $b \in [i \dd j]$ and $(j-i+1) \mid (b-a)$.

We define the set $\LeftMis_k(S,i,j)$ as the set of $k$ maximal misperiods that are smaller than $i$ and $\RightMis_k(S,i,j)$ as the set of $k$ minimal misperiods that are greater than $j$.
Each of the sets can have less than $k$ elements if the corresponding misperiods do not exist.
We further define
\[\Misp_k(S,i,j)=\LeftMis_k(S,i,j) \cup \RightMis_k(S,i,j)\]
and $\Misp(S,i,j) = \bigcup_{k=0}^\infty \Misp_k(S,i,j)$.

The following lemma captures a combinatorial property behind the new technique of Bringmann et al.~\cite{DBLP:conf/soda/BringmannWK19}. The intuition is shown in Fig.~\ref{fig:bring}.

\begin{lemma}\label{lem:JR}
  Assume that $S =_k S'$ and that $S[i \dd j] = S'[i \dd j]$.
  Let
    \[I=\Misp_{k+1}(S,i,j) \text{ and } I'=\Misp_{k+1}(S',i,j).\]
  If $I \cap I' = \emptyset$, then $\Mis(S,S')=I \cup I'$, $I=\Misp(S,i,j)$, and $I'=\Misp(S',i,j)$.
\end{lemma}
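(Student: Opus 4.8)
The plan is to analyze each position $a$ by comparing exactly three letters: $S[a]$, $S'[a]$, and $c:=S[b]$, where $b\in[i\dd j]$ is the reference position of $a$ in the periodic extension of the fragment (the unique $b$ with $(j-i+1)\mid(b-a)$). Because $b$ lies in the region where the two strings agree, $S[b]=S'[b]=c$, and the reference $b$ is common to $S$ and $S'$. This yields a convenient dictionary: $a$ is a misperiod of $S$ iff $S[a]\ne c$, a misperiod of $S'$ iff $S'[a]\ne c$, and a mismatch iff $S[a]\ne S'[a]$. I would first record two elementary consequences of this. \textbf{(1)} Any position that is a misperiod of exactly one of $S,S'$ is a mismatch: if (say) $S[a]=c$ but $S'[a]\ne c$, then $S[a]\ne S'[a]$. \textbf{(2)} Any position that is a misperiod of neither is not a mismatch: then $S[a]=c=S'[a]$. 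Equivalently, every mismatch is a misperiod of at least one of the two strings, so $\Mis(S,S')\subseteq \Misp(S,i,j)\cup\Misp(S',i,j)$.

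The heart of the argument is to upgrade the hypothesis $I\cap I'=\emptyset$ (about the nearest $k+1$ misperiods on each side) to the \emph{global} disjointness $\Misp(S,i,j)\cap\Misp(S',i,j)=\emptyset$. I would argue by contradiction on the right side; the left side is symmetric. Suppose $S$ and $S'$ share a misperiod greater than $j$, and let $a^\ast$ be the \emph{smallest} such common right-misperiod. By minimality, every position in $(j,a^\ast)$ is a misperiod of at most one of $S,S'$; hence by consequence (1) each such misperiod is a mismatch, and the $S$-misperiods and $S'$-misperiods lying in $(j,a^\ast)$ form two disjoint sets of mismatches, of sizes $d_S$ and $d_{S'}$, say. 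Since these are distinct mismatches, $d_S+d_{S'}\le|\Mis(S,S')|\le k$. But then $a^\ast$ is the $(d_S+1)$-st smallest right-misperiod of $S$ and the $(d_{S'}+1)$-st smallest right-misperiod of $S'$, with $d_S+1\le k+1$ and $d_{S'}+1\le k+1$; therefore $a^\ast\in\RightMis_{k+1}(S,i,j)\cap\RightMis_{k+1}(S',i,j)\subseteq I\cap I'$, contradicting $I\cap I'=\emptyset$. The mirror argument, applied to the \emph{largest} common left-misperiod, rules out shared left-misperiods, giving the claimed global disjointness.

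With disjointness established, no position is a misperiod of both strings, so by consequence (1) every misperiod of $S$ (resp.\ $S'$) is a mismatch; thus $\Misp(S,i,j)\sqcup\Misp(S',i,j)\subseteq\Mis(S,S')$. Combined with the reverse inclusion from consequence (2), this gives $\Mis(S,S')=\Misp(S,i,j)\cup\Misp(S',i,j)$ and, in particular, $|\Misp(S,i,j)|+|\Misp(S',i,j)|\le k$. Consequently $S$ has at most $k\le k+1$ misperiods on each side of $[i\dd j]$ (and likewise $S'$), so $\LeftMis_{k+1}$ and $\RightMis_{k+1}$ already enumerate all of them: $I=\Misp(S,i,j)$ and $I'=\Misp(S',i,j)$. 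Substituting these two identities into the equality above yields $\Mis(S,S')=I\cup I'$, completing the proof.

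The main obstacle is precisely the bootstrap in the second paragraph: the hypothesis only controls the nearest $k+1$ misperiods, yet the conclusion is a statement about \emph{all} misperiods. The smallest/largest-common-misperiod trick is what bridges this gap, as it shows any hypothetical shared misperiod would be preceded (on its side) by at most $k$ one-sided misperiods, forcing it into the first $k+1$ of both lists. I would be careful to handle the two sides independently and to check the degenerate cases in which fewer than $k+1$ misperiods exist, so that $\RightMis_{k+1}$ and $\LeftMis_{k+1}$ indeed capture the full sets.
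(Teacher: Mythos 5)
Your proof is correct. It is worth comparing with the paper's own proof, which is only three lines long: it asserts that $I \cup I' \subseteq \Mis(S,S')$ ``since $I \cap I' = \emptyset$'', deduces from $|\Mis(S,S')| \le k$ that $|I|,|I'| \le k$ and hence $I = \Misp(S,i,j)$, $I' = \Misp(S',i,j)$, and concludes via $\Mis(S,S') \subseteq \Misp(S,i,j) \cup \Misp(S',i,j)$ (your consequence~(2)). The whole difficulty is concentrated in that asserted first step, and disjointness of $I$ and $I'$ alone does not imply it: a position $a \in I$ could a priori satisfy $S[a]=S'[a]$ with this common letter differing from the reference letter, provided $a$ lies beyond the $k+1$ nearest misperiods of $S'$ on its side. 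For instance, with $k=0$, $Q=S[0\dd 0]$, $S=\texttt{aab}$, $S'=\texttt{abb}$, the sets $I=\{2\}$ and $I'=\{1\}$ are disjoint, yet position $2$ is not a mismatch (no contradiction with the lemma, since $S =_0 S'$ fails here). Ruling this configuration out genuinely requires the hypothesis $S =_k S'$ together with a counting argument, and your second paragraph --- taking the smallest common right-misperiod $a^*$, observing that the one-sided misperiods in $(j,a^*)$ are pairwise distinct mismatches, and concluding $a^* \in \RightMis_{k+1}(S,i,j) \cap \RightMis_{k+1}(S',i,j)$ --- is exactly that missing argument. So the two proofs share the same skeleton (triangle-style inclusions plus a cardinality argument identifying the truncated sets $I,I'$ with the full sets $\Misp$), but where the paper compresses the crux into an unproven ``observation'', you first establish the stronger intermediate fact $\Misp(S,i,j) \cap \Misp(S',i,j) = \emptyset$ by a minimal-counterexample bootstrap; your write-up is the more complete of the two.
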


\begin{proof}
Let $J=\Misp(S,i,j)$ and $J'=\Misp(S',i,j)$.
We first observe that $I \cup I' \subseteq \Mis(S,S')$ since $I \cap I' = \emptyset$.
Then, $S =_k S'$ implies that $|\Mis(S,S')| \leq k$ and hence $|I|\leq k$ and $|I'| \leq k$, which in turn implies that $I=J$ and $I'=J'$.
The observation that $\Mis(S,S') \subseteq J \cup J'$ concludes the proof.
\end{proof}

\begin{figure}[htpb]
  \centering
  \begin{tikzpicture}[scale=0.5]


\begin{scope}[yshift=3.5cm]

  \draw[thick] (0,-0.4) rectangle (20,0.4);
  \draw[fill=white!75!black,draw] (7,-0.4) rectangle (10,0.4);

  \draw (0,0) node[left=0.1cm] {$S$};
  \draw (8.5,0) node {$Q$};
    \draw[red] (2,0) node {$*$};
  \draw[red] (11,0) node {$*$};
    \draw[red] (18,0) node {$*$};
\end{scope}


   \draw[thick] (0,-0.4) rectangle (20,0.4);
  \draw[fill=white!75!black,draw] (7,-0.4) rectangle (10,0.4);
   
  \draw (0,0) node[left=0.1cm] {$S'$};
  \draw (7.25,-0.4) node[below] {$i$};
  \draw (9.75,-0.4) node[below] {$j$};
  \draw (8.5,0) node {$Q$};
  \draw[red] (3.5,0) node {$*$};
  \draw[red] (15,0) node {$*$};


\begin{scope}[yshift=1.5cm]
  \draw (0,-0.4) rectangle (20,0.4);
  \draw[fill=white!75!black,draw] (7,-0.4) rectangle (10,0.4);
  \draw (0,0) node[left=0.1cm] {$X$};
  \draw (8.5,0) node {$Q$};
    
\begin{scope}[yshift=0.6cm]
  \foreach \x in {-2,1,4,7,10,13,16,19}{
    \clip (0,0) rectangle (20,0.7);
    \draw[xshift=\x cm] (0,0) sin (1.5,0.4) cos (3,0);
  }
  
\end{scope}

\end{scope}

\end{tikzpicture}
  \caption{Let $S$, $S'$, and $X$ be equal-length strings such that $X$ is a factor of $Q^\infty$ and $S[i \dd j] = S'[i \dd j] = X[i \dd j]=Q$. The asterisks in $S$ denote the positions in $\Mis(S,X)$, or equivalently, the misperiods with respect to $S[i \dd j]$. Similarly for $S'$. One can observe that $\Mis(S,X) \cap \Mis(S',X)= \emptyset$ and that $\Mis(S,X) \cup \Mis(S',X) = \Mis(S,S')$.
  }\label{fig:bring}
\end{figure}
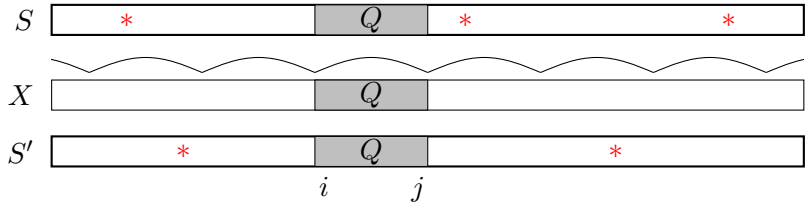

A string $S$ is \emph{$k$-periodic w.r.t.\ an occurrence $i$ of $Q$} if $|\Misp(S,i,i+|Q|-1)| \le k$.
In this case, $|Q|$ is called the \emph{$k$-period}.
In particular, in the conclusion of the above lemma $S$ and $S'$ are $|I|$-periodic and $|I'|$-periodic, respectively, w.r.t. $Q=S[i \dd j] = S'[i \dd j]$.
This notion forms the basis of the following auxiliary problem in which we search for $k$-occurrences in which the rotation of the pattern and
the fragment of the text are $k$-periodic for the same period $Q$.

Let $U$ and $V$ be two strings and $J$ and $J'$ be sets containing positions in $U$ and $V$, respectively.
We say that length-$m$ fragments $U[p \dd p+m-1]$ and $V[x \dd x+m-1]$ are \emph{$(J,J')$-disjoint}
if the sets $(\W_p \cap J) \ominus p$ and $(\W_x \cap J') \ominus x$ are disjoint.
For example, if $J=\{2,4,11,15,16,17\}$, $J'=\{5,6,15,18,19\}$, and $m=12$,
then $U[3 \dd 14]$ and $V[6 \dd 17]$ are $(J,J')$-disjoint for:

\setlength{\fboxrule}{0.1pt} 
\begin{center}
$U=$\ \texttt{\ \ \ \,ab$\bullet$\,\framebox{a$\bullet$b\,abc\,ab$\bullet$\,abc}\,$\bullet${$\bullet$}$\bullet$}\\
$V=$\ \texttt{abc\,ab$\bullet$\,\framebox{{$\bullet$}bc\,abc\,abc\,{$\bullet$}bc}\,$\bullet${$\bullet$}c}
\end{center}

Let us introduce an auxiliary problem that is obtained in the case that is shown in the conclusion of the above lemma
(i.e., misperiods in the rotation of $P$ and the corresponding fragment of $T$ are not aligned); see also Fig.~\ref{fig:primitive}.

\defproblem{Periodic-Periodic-Match Problem}{
A string $U$ which is $2k$-periodic w.r.t.\ an exact occurrence $i$
of a length-$q$ string $Q$ and 
a string $V$ which is $2k$-periodic w.r.t.\ an exact occurrence $i'$ of the same string $Q$
such that $m \le |U|,|V| \le 2m$ and

\smallskip
\centerline{
 $J=\Misp(U,i,i+q-1),\  \ J'=\Misp(V,i',i'+q-1)$.}

\smallskip
(The strings $U$ and $V$ are not stored explicitly.)
}{
The set of positions $p$ in $U$ for which there exists a
$(J,J')$-disjoint $k$-occurrence $U[p \dd p+m-1]$ of $V[x \dd x+m-1]$
for $x$ such that

\smallskip
\centerline{
$i-p \equiv i'-x \pmod {q}$.}
}

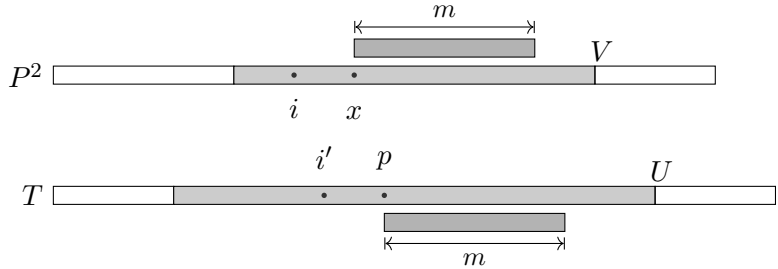
\begin{figure}[htpb]
  \centering
  \begin{tikzpicture}[scale=0.8]

    \begin{scope}[yshift=2cm]
      \draw[thin] (0,0) rectangle (3,0.3);
      \draw[thin, fill=black!20!white] (3,0) rectangle (9,0.3)
        node [above, right, yshift=0.2cm, xshift=-0.2cm] {$V$};
      \draw[thin] (9,0) rectangle (11,0.3);
      \draw (0,0.15) node [left] {$P^2$};
 
      \node (ii) [circle,fill=black!80!white,inner sep=0.25mm] at (4, 0.15) {};
      \node [below of=ii, yshift=0.55cm] {$i$};
   
      \node (zz) [circle,fill=black!80!white,inner sep=0.25mm] at (5, 0.15) {};
      \node [below of=zz, yshift=0.5cm] {$x$};
    
      \draw[thin, fill=black!30!white] ($(zz)+(0, 0.3)$) rectangle ($(zz)+(3, 0.6)$);
      \draw[|<->|] ($(zz)+(0, 0.8)$) -- ($(zz)+(3, 0.8)$) node[midway, above] {\small $m$};
    \end{scope}
    
    \begin{scope}[yshift=0cm]
      \draw[thin] (0,0) rectangle (2,0.3);
      \draw[thin, fill=black!20!white] (2,0) rectangle (10,0.3)
        node [below, right, yshift=0.2cm, xshift=-0.2cm] {$U$};
      \draw[thin] (10,0) rectangle (12,0.3);
       \draw (0,0.15) node [left] {$T$};

      \node (ii1) [circle,fill=black!80!white,inner sep=0.25mm] at (4.5, 0.15) {};
      \node [above of=ii1, yshift=-0.5cm] {$i'$};
   
      \node (zz) [circle,fill=black!80!white,inner sep=0.25mm] at (5.5, 0.15) {};
      \node [above of=zz, yshift=-0.54cm] {$p$};
    
      \draw[thin, fill=black!30!white] ($(zz)+(0, -0.6)$) rectangle ($(zz)+(3, -0.3)$);
      \draw[|<->|] ($(zz)+(0, -0.8)$) -- ($(zz)+(3, -0.8)$) node[midway, below] {\small $m$};
    \end{scope}
  
\end{tikzpicture}
  \caption{In the case of a periodic sample, we have two $2k$-periodic fragments $U$ and $V$.
	We find all $p$ in $U$ such that for some position $x$ in $V$, the fragments of length $m$ starting at positions $p$ and $x$ are at Hamming distance at most $k$.
  If $q$ is a period, then $i-p \equiv i'-x \pmod q$ due to synchronization of periodicities.}\label{fig:primitive}
\end{figure}

\smallskip
\noindent
Intuitively, the modulo condition on the output of the \textsc{Periodic-Periodic-Match} problem corresponds to the fact that the approximate periodicity is aligned.

In the \textsc{Periodic-Periodic-Match} problem we search for $k$-occurrences in 
which none of the misperiods in $J$ and $J'$ are aligned. In this case each of 
the misperiods accounts for one mismatch in the $k$-occurrence. 
In the lemma below we reduce the \textsc{Periodic-Periodic-Match} problem to the \textsc{Aligned-Light-Sum}
problem, in which we only require that the total number of misperiods in an 
occurrence is at most $k$. This way all the $(J,J')$-disjoint $k$-occurrences 
can be found. Also additional occurrences where two misperiods are aligned can be 
reported, but they are still valid $k$-occurrences (actually, $k'$-occurrences 
for some $k' < k$). 

\begin{lemma}\label{lem:ppm}
We can compute in $\Oh(k^2)$ time a set of $k$-occurrences of $P$ in $T$ represented as $\Oh(k^2)$ interval chains, each with difference $q$, that is a superset of the solution to the \textsc{Periodic-Periodic-Match} problem.
\end{lemma}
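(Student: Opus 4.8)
The plan is to reduce \textsc{Periodic-Periodic-Match} to the \textsc{Aligned-Light-Sum} problem of Lemma~\ref{Oct31}. First I would encode the two misperiod sets as binary strings: let $B_U$ be the binary string on the index range of $U$ with $B_U[a]=1$ exactly when $a\in J$, and let $B_V$ be defined analogously from $J'$. Because $U$ and $V$ are $2k$-periodic w.r.t.\ $Q$, we have $|J|=|\Misp(U,i,i+q-1)|\le 2k$ and likewise $|J'|\le 2k$, so each of $B_U,B_V$ carries only $\Oh(k)$ non-zero characters and can be produced directly from the given sorted position lists in $\Oh(k)$ time --- exactly the input format that \textsc{Aligned-Light-Sum} expects.

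The heart of the argument is a synchronization observation, and I expect it to be the main obstacle. Fix candidate window starts $p$ in $U$ and $x$ in $V$ with $i-p\equiv i'-x\pmod q$. For every offset $t$ the modular condition gives $p+t-i\equiv x+t-i'\pmod q$, so the periodic characters expected at $U[p+t]$ and $V[x+t]$ are the same letter $Q[(p+t-i)\bmod q]$. Hence a genuine mismatch $U[p+t]\ne V[x+t]$ can occur only when $p+t\in J$ or $x+t\in J'$, so the Hamming distance between the two length-$m$ windows is at most $||(B_U)_{(p)}||+||(B_V)_{(x)}||$, with equality precisely for $(J,J')$-disjoint occurrences. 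Consequently the set $\{p:\exists x,\ ||(B_U)_{(p)}||+||(B_V)_{(x)}||\le k\ \land\ i-p\equiv i'-x\pmod q\}$ contains every $(J,J')$-disjoint $k$-occurrence --- hence it is a superset of the sought solution --- while every position it contains is a genuine $k$-occurrence, since the true Hamming distance never exceeds the weight sum.

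It remains to cast the modular constraint into the form tested by \textsc{Aligned-Light-Sum}, namely $b\equiv a\pmod q$. Rearranging $i-p\equiv i'-x$ gives $x\equiv p+\delta\pmod q$ with $\delta:=(i'-i)\bmod q$, so I would translate the marked positions of one of the two strings by $\delta$ and extend its index range accordingly (an $\Oh(k)$-time relabeling). This preserves every window weight $||(B_V)_{(x)}||$ and turns the constraint into $b\equiv a\pmod q$; the handful of boundary windows created by the shift are dealt with by zero-padding and are routine bookkeeping.

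Finally, Lemma~\ref{Oct31} solves the resulting instance in $\Oh(k^2)$ time and returns the admissible values of $p$ as $\Oh(k^2)$ interval chains of difference $q$. Translating these indices from $U$ to $T$ by adding the start offset of $U$ (a constant shift that leaves the chain difference $q$ intact and costs $\Oh(1)$ per chain) yields the final collection of $k$-occurrences of $P$ in $T$. By the second paragraph it is a superset of the \textsc{Periodic-Periodic-Match} solution consisting only of valid $k$-occurrences, and the whole computation runs in $\Oh(k^2)$ time. The synchronization and counting step is the conceptual core; the modular shift and the coordinate translation are routine bookkeeping.
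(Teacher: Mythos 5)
Your proposal is correct and follows essentially the same route as the paper: encode the misperiod sets $J,J'$ as sparse binary strings, align the modular constraint by shifting one string by $(i'-i)\bmod q$ (the paper prepends that many zeros), and invoke Lemma~\ref{Oct31}, with the superset/validity guarantee coming from the fact that under the congruence condition mismatches can only occur at misperiod positions. The only difference is presentational: you spell out the synchronization argument that the paper compresses into one sentence, which is a welcome but not divergent addition.
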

\begin{proof}
In the \textsc{Periodic-Periodic-Match} problem the modulo condition forces the exact occurrences of the approximate period to match.
Hence, it guarantees that all the positions except the misperiod positions match.
Now the \textsc{Aligned-Light-Sum} problem highlights these positions inside the string.

\begin{claim} 
\textsc{Periodic-Periodic-Match} can be reduced in $\Oh(k)$ time to the \textsc{Aligned-Light-Sum} problem
so that we obtain a superset of the desired result.
The potential extra positions do not satisfy only the $(J,J')$-disjointness condition.
\end{claim}
\begin{proof}
Let the parameters $m$, $k$ and $q$ remain unchanged.
We create strings $U'$ and $V'$ of length $|U|$ and $|V|$, respectively, with positions with non-zero characters in the sets $I$ and $I'=J'$, respectively.
Then we prepend $U'$ with $z=(i'-i) \bmod q$ zeros.
Let $A$ be the solution to the \textsc{Aligned-Light-Sum} problem for $U'$ and $V'$.
Then $(A \ominus z) \cap \mathbb{Z}_{\ge 0}$ is a superset of the solution to \textsc{Periodic-Periodic-Match};
the elements of the set that correspond to matches where non-zero elements of the strings $U'$, $V'$ were aligned do not satisfy the disjointness condition.
\end{proof}
Now the thesis follows from Lemma~\ref{Oct31}.
\end{proof}


Let us further define 

\[\textsc{Pairs-Match}(T,I,P,J)=\bigcup_{i \in I, j \in J} \textsc{Pair-Match}(T,i,P,j).\]

Let $A$ be a set of positions in a string $S$ and $m$ be a positive integer.
We then denote $A \bmod m = \{a \bmod m\,:\,a \in A\}$
and by $\frag_A(S)$ we denote the fragment $S[\min(A) \dd \max(A)]$.
We provide pseudocode for an algorithm that computes all $k$-occurrences of $P$ such 
that $\sample$ matches a fragment of a given $\sample$-run (see~\cref{alg:periodic}); inspect also Fig.~\ref{fig:app:periodic}.

\begin{algorithm}
\vspace*{0.3cm}
\caption{Run-Sample-Matching}\label{alg:periodic}
\KwData{A periodic fragment $\sample$ of pattern $P$, an $\sample$-run $R$ in text $T$, $q=\per(\sample)$, and $k$.}
\KwResult{A compact representation of $k$-occurrences of $P$ in $T$ including all $k$-occurrences\\ where $\sample$ in $P$ matches a fragment of $R$ in $T$.}

\setstretch{1.2}
Let $R=T[s \dd s+|R|-1]$\; 
$J:=\Misp_{k+1}(T,s,s+q-1)$; \{ $\Oh(k)$\,time \}\\
$J':=\Misp_{k+1}(P^2,m+p_{\sample},m+p_{\sample}+q-1)$; \{ $\Oh(k)$ time \}\\
$U:=\frag_J(T)$; $V:=\frag_{J'}(P^2)$\;
$Y:=\textsc{Periodic-Periodic-Match}(U,V)$; \{ $\Oh(k^2)$ time \}\\
$Y:=Y \oplus \min(J)$\;
$J':=J' \bmod m$;\\
$X:= \textsc{Pairs-Match}(T,J,P,J')$; \{ $\Oh(k^3)$ time \} \\
\Return{$X\ \cup\ Y$};
\vspace*{0.3cm}
\end{algorithm}

\begin{lemma}\label{lem:per_time}
  After $\Oh(n)$-time preprocessing, algorithm Run-Sample-Matching works in $\Oh(k^3)$ time and
  returns a compact representation that consists of $\Oh(k^3)$ intervals and $\Oh(k^2)$ interval chains, each with difference $q$.
  Moreover, if there is at least one interval chain, then some rotation of the pattern $P$ is $k$-periodic with a $k$-period $\per(\sample)$.
\end{lemma}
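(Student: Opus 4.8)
The plan is to establish the three assertions of the lemma---the $\Oh(k^3)$ running time, the output being a union of $\Oh(k^3)$ intervals and $\Oh(k^2)$ interval chains of difference $q$, and the ``moreover'' statement---in turn, with correctness (that every required $k$-occurrence is reported) as the underlying justification for the design of \cref{alg:periodic}.

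First I would bound the running time by summing over the lines of \cref{alg:periodic}. Each of the misperiod sets $J$ and $J'$ has at most $k+1$ elements on each side of the period occurrence and, using the kangaroo method on the $\Oh(n)$-size LCP data structure built in preprocessing, can be produced in $\Oh(k)$ time; forming $U=\frag_J(T)$ and $V=\frag_{J'}(P^2)$ only needs the minima and maxima of these sets, again $\Oh(k)$. The call to \textsc{Periodic-Periodic-Match} costs $\Oh(k^2)$ and, by \cref{lem:ppm}, returns $\Oh(k^2)$ interval chains of difference $q$; shifting $Y$ by $\min(J)$ and reducing $J'$ modulo $m$ take $\Oh(k^2)$ and $\Oh(k)$ time. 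The dominant cost is \textsc{Pairs-Match}$(T,J,P,J')$, which invokes \textsc{Pair-Match} for each of the $\Oh(k)\cdot\Oh(k)=\Oh(k^2)$ pairs $(i,j)\in J\times J'$; by \cref{lem:mpp} every call runs in $\Oh(k)$ time and contributes $\Oh(k)$ intervals. Hence the total time is $\Oh(k^3)$, and the returned set $X\cup Y$ consists of $\Oh(k^3)$ intervals (from $X$) together with $\Oh(k^2)$ interval chains of difference $q$ (from $Y$), exactly as claimed.

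For correctness I would invoke the dichotomy of \cref{lem:JR}. Fix a $k$-occurrence at position $p$ with rotation $x$ in which $\sample$ matches a fragment of $R$, and set $S=T[p\dd p+m-1]$ and $S'=\rot_x(P)$; because $\sample$ matches exactly, the length-$q$ string $Q=T[s\dd s+q-1]$ occurs at a common position in $S$ and $S'$. Let $I$ and $I'$ be the associated $\Misp_{k+1}$ sets. If $I\cap I'=\emptyset$, then \cref{lem:JR} shows that $S$ and $S'$ are both $k$-periodic and that their mismatches are precisely the $(J,J')$-disjoint misperiods, so the occurrence satisfies the synchronisation $i-p\equiv i'-x\pmod q$ and is reported inside $Y$. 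Otherwise some position $a\in I\cap I'$ is a misperiod of both strings; since the $k+1$ nearest misperiods of a subfragment form a prefix of those of $T$ and of $P^2$, the text position $p+a$ lies in $J$ and the pattern position $(a+x)\bmod m=(i-p+x)\bmod m$ lies in $J'\bmod m$, so $(p+a,(a+x)\bmod m)$ is a matching pair and the corresponding \textsc{Pair-Match} call aggregated in $X$ reports the occurrence. Thus every required $k$-occurrence lies in $X\cup Y$.

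Finally, the ``moreover'' claim follows from the reduction in \cref{lem:ppm} to \textsc{Aligned-Light-Sum}: an interval chain is emitted only when that problem has a nonempty answer, i.e.\ only when there exist aligned windows with $\|U'_{(i)}\|+\|V'_{(j)}\|\le k$, whence in particular $\|V'_{(j)}\|\le k$. This means the length-$m$ window of $V=\frag_{J'}(P^2)$ starting at $j$---which is a rotation of $P$---has at most $k$ misperiods with respect to $Q$, so it is $k$-periodic with $k$-period $|Q|=q=\per(\sample)$, as required. I expect the genuine obstacle to be the coordinate bookkeeping in the correctness step: one must carefully translate misperiods among the absolute indices of $T$, the doubled pattern $P^2$, and the rotated string $S'=\rot_x(P)$, and verify that the $\bmod q$ condition enforced by \textsc{Periodic-Periodic-Match} and the $\bmod m$ reduction of $J'$ match the matching-pair relation $(i,(i-p+x)\bmod m)$ exactly. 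By comparison, the time and output-size bounds are a routine tally over the lines of the algorithm.
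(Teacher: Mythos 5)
Your proposal is correct and follows essentially the same route as the paper: the time and output bounds come from a line-by-line tally of \cref{alg:periodic}, with the misperiod sets $J,J'$ computed in $\Oh(k)$ time via the kangaroo method and the $\Oh(k^3)$ intervals and $\Oh(k^2)$ chains attributed to \cref{lem:mpp} and \cref{lem:ppm} respectively, exactly as in the paper. Two small remarks. First, your middle paragraph (showing every required $k$-occurrence lands in $X\cup Y$) is not part of this lemma at all --- the paper isolates that as \cref{lem:per_corr}; including it does no harm, but it is not needed here. Second, for the ``moreover'' claim the paper cites \cref{lem:JR} (an occurrence reported by \textsc{Periodic-Periodic-Match} forces a rotation of $P$ to be $k$-periodic), whereas you argue directly from the reduction inside \cref{lem:ppm}: a chain is emitted only when \textsc{Aligned-Light-Sum} certifies $\|U'_{(i)}\|+\|V'_{(j)}\|\le k$, hence $\|V'_{(j)}\|\le k$, hence the corresponding rotation is $k$-periodic. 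Your variant is in fact slightly more robust, since the algorithm's chains are a \emph{superset} of the \textsc{Periodic-Periodic-Match} solution and your argument covers chains arising from non-$(J,J')$-disjoint alignments as well. The one detail you gloss over that the paper spells out is how the kangaroo method is made to work against the periodic extension $U^\infty$ (the paper gives an explicit constant-time reduction of $\lcp(X^\infty,Y)$ to two ordinary $\lcp$ queries); you should not treat this as entirely automatic, since the preprocessed structure only answers queries between fragments of the stored text.
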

\begin{proof}
  See~\cref{alg:periodic}.
  The sets $J$ and $J'$ can be computed in $\Oh(k)$ time:
  \begin{claim}
    If $S$ is a string of length $n$, then the sets $\RightMis_k(S,i,j)$ and $\LeftMis_k(S,i,j)$
    can be computed in $\Oh(k)$ time after $\Oh(n)$-time preprocessing.
  \end{claim}
  \begin{proof}
    For $\RightMis_k(S,i,j)$, we use the kangaroo method~\cite{DBLP:journals/tcs/LandauV86,DBLP:journals/tcs/GalilG87}
    to compute the longest common prefix with at most $k$ mismatches of $S[j+1 \dd n-1]$ and $U^\infty$ for $U=S[i \dd j]$.
    The value $\lcp(X^\infty, Y)$ for a fragment $X$ and a suffix $Y$ of a string $S$, occurring at positions $a$ and $b$, respectively, can be computed in constant time as follows. If $\lcp(S[a \dd n-1], S[b \dd n-1]) < |X|$ then we are done. Otherwise the answer is given by $|X|+\lcp(S[b \dd n-1], S[b+|X| \dd n-1])$.
    The computations for $\LeftMis_k(S,i,j)$ are symmetric.
  \end{proof}
  The $\Oh(k^3)$ and $\Oh(k^2)$ time complexities of computing $X$ and $Y$ follow from Lemmas~\ref{lem:mpp} and~\ref{lem:ppm}, respectively
  (after $\Oh(n)$-time preprocessing).
  The sets $X$ and $Y$ consist of $\Oh(k^3)$ intervals and $\Oh(k^2)$ interval chains, each with difference $q$.

  As for the ``moreover'' statement, by Lemma~\ref{lem:JR}, if any occurrence $q$ is reported in the \textsc{Periodic-Periodic-Match} problem, then it implies
  the existence of $x$ such that $V[x \dd x+m-1]$ is $k$-periodic with a $k$-period $q$.
  However, $V[x \dd x+m-1]$ is a rotation of the pattern $P$.
  This concludes the proof.
\end{proof}

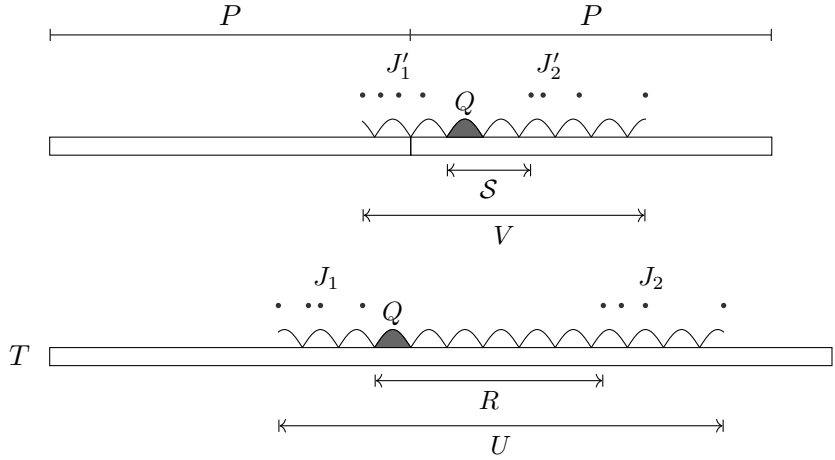
\begin{figure}[thpb]
  \centering
  \begin{tikzpicture}[scale=0.8]
  \def\perLen{0.6}

\begin{scope}[yshift=3.5cm]
  \draw[thin] [|-|] (0,2) -- (6, 2) node[midway,above] {$P$};
  \draw[thin] [-|] (6,2) -- (12, 2) node[midway,above] {$P$};
  \draw[thin] (0,0) rectangle (6,0.3);
  \draw[thin] (6,0) rectangle (12,0.3);
  \foreach \x in {5.2, 5.5, 5.8, 6.2, 8.0, 8.2, 8.8, 9.9} {
    \node[circle,fill=black!80!white,inner sep=0.25mm] at (\x, 1.0) {};
  }
  \node at (5.8, 1.1) [above] {\small $J'_1$};
  \node at (8.3, 1.1) [above] {\small $J'_2$};
  \node at (6.9, 0.5) [above] {\small $Q$};
  
  \draw [|<->|] (11*\perLen, -0.25) -- (8.0, -0.25)
    node[midway, below] {\small $\sample$};
  \draw [|<->|] (5.2, -1.0) -- (9.9, -1.0)
    node[midway, below] {\small $V$};
\end{scope}

\begin{scope}[yshift=3.5cm]
    \clip (5.2, 0.3) rectangle (9.9, 1.5);
    \foreach \x in {0,...,50}{
      \draw (\x*\perLen,0.3) sin (\x*\perLen+0.5*\perLen,0.6) cos (\x*\perLen+\perLen,0.3);
    }
     \draw[fill=black!60!white] (11*\perLen,0.3) sin (11*\perLen+0.5*\perLen,0.6) cos (11*\perLen+\perLen,0.3) -- cycle;
\end{scope}

\begin{scope}[yshift=0cm]
  \draw[thin] (0,0) rectangle (13,0.3);
  \node at (-0.5, 0.18) {$T$};
  \node at (9*\perLen+\perLen*0.5, 0.5) [above] {\small $Q$};
  \node at (4.6, 1.1) [above] {\small $J_1$};
  \node at (10, 1.1) [above] {\small $J_2$};
  \foreach \x in {3.8, 4.5, 4.3, 5.2, 9.2, 9.5, 9.9, 11.2} {
    \node[circle,fill=black!80!white,inner sep=0.25mm] at (\x, 1.0) {};
  }
  \draw [|<->|] (9*\perLen, -0.25) -- (9.2, -0.25)
    node[midway, below] {\small $R$};
  \draw [|<->|] (3.8, -1) -- (11.2, -1)
    node[midway, below] {\small $U$};

\end{scope}

\begin{scope}[yshift=0cm]
    \clip (3.8, 0.3) rectangle (11.2, 1.5);
    \foreach \x in {0,...,50}{
      \draw (\x*\perLen,0.3) sin (\x*\perLen+0.5*\perLen,0.6) cos (\x*\perLen+\perLen,0.3);
    }
    \draw[fill=black!60!white] (9*\perLen,0.3) sin (9*\perLen+0.5*\perLen,0.6) cos (9*\perLen+\perLen,0.3) -- cycle;
\end{scope}

\end{tikzpicture}
  \caption{Detailed setting in Algorithm~\ref{alg:periodic}; 
$J=J_1 \cup J_2$, $J'=J'_1\cup J'_2$.}\label{fig:app:periodic}
\end{figure}

\noindent
The correctness of the algorithm follows from Lemma~\ref{lem:JR}, as shown in the lemma below.

\begin{lemma}\label{lem:per_corr}
Assume $n\le 2m$. Let $\sample$ be a periodic sample in $P$ with smallest period $q$ and $R$ be an $\sample$-run in $T$.
Let $X$ and $Y$ be defined as in the pseudocode of Run-Sample-Matching.
Then $X \cup Y$ is a set of $k$-occurrences of $P$ in $T$
which is a superset of the solution to \textsc{Sample-Match} for $\sample$ in $R$.
\end{lemma}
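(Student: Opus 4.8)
The plan is to establish the two halves of the claim separately: first that every position in $X\cup Y$ is a genuine $k$-occurrence of $P$ in $T$, and then that every $k$-occurrence in which $\sample$ matches a fragment of $R$ exactly lies in $X\cup Y$. The first half is immediate from the tools already in hand: by Lemma~\ref{lem:mpp} each call to \textsc{Pair-Match} returns only valid $k$-occurrences, so their union $X=\textsc{Pairs-Match}(T,J,P,J')$ consists of $k$-occurrences; and by Lemma~\ref{lem:ppm} the set produced by \textsc{Periodic-Periodic-Match} is a set of $k$-occurrences of $P$ in $T$, with the subsequent shift by $\min(J)$ merely re-expressing $Y$ in global coordinates of $T$. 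The real work is the completeness direction.

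So I would fix a $k$-occurrence at position $p$ with rotation $x$ for which $\sample$ matches exactly a fragment of $R$, and set $S=T[p\dd p+m-1]$ and $S'=\rot_x(P)=P^2[x\dd x+m-1]$, so that $S=_k S'$. The exact match of $\sample$ yields an occurrence of $Q=\sample[0\dd q-1]$ in $S$ together with an aligned occurrence in $S'$; in $T$ this occurrence lies inside $R$, hence is congruent to $s\pmod q$, while in $P^2$ it is the copy of $Q$ at position $m+p_{\sample}$ (this is where the assumption that the split point is to the right of $\sample$ is used). I would then invoke Lemma~\ref{lem:JR} with these two aligned occurrences of $Q$ and split according to whether $\Misp_{k+1}(S,\cdot)$ and $\Misp_{k+1}(S',\cdot)$ are disjoint.

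The crucial auxiliary fact, which I would prove first, is that every misperiod of $S$ lies in $J$ and every misperiod of $S'$ corresponds (after reduction modulo $m$) to an element of $J'$. This holds because there are no misperiods inside $R$ (an $\sample$-run of period $q$) nor inside the occurrence of $\sample$ in $P$; consequently the misperiods of $S$ met while moving left or right from its $Q$-occurrence form an \emph{initial segment} of the global left/right misperiod sequences with respect to $s$, and symmetrically for $S'$ with respect to $m+p_{\sample}$. Hence any misperiod lying in $\Misp_{k+1}$ of its string has its global counterpart retained by $J=\Misp_{k+1}(T,s,s+q-1)$ or by $J'$. In the disjoint case, Lemma~\ref{lem:JR} gives $|\Mis(S,S')|\le k$, identifies all misperiods of $S$ and $S'$, and certifies that the occurrence is $(J,J')$-disjoint with the matching modulo condition (the $Q$-occurrences of $S$ and $S'$ being aligned); it is therefore a solution of \textsc{Periodic-Periodic-Match}$(U,V)$ and, after the shift by $\min(J)$, lands in $Y$. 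In the non-disjoint case there is a position $r$ in the common local coordinate that is a misperiod of both strings and lies among the nearest $k+1$ on each side; by the initial-segment fact, $i^\star:=p+r\in J$ and $j^\star:=(x+r)\bmod m\in J'$, and since $(i^\star-p+x)\bmod m=j^\star$ with $i^\star\in\W_p$, the pair $(i^\star,j^\star)$ is a matching pair of the occurrence. Thus \textsc{Pair-Match}$(T,i^\star,P,j^\star)$ reports it, placing it in $X$.

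I expect the main obstacle to be exactly this misperiod bookkeeping: verifying rigorously that the window's misperiods are initial segments of the global sequences, so that the nearest $k+1$ of them survive into $J$ and $J'$, and, in the non-disjoint case, checking that the exhibited pair $(p+r,(x+r)\bmod m)$ is simultaneously a matching pair and an element of $J\times J'$. The remaining ingredients—the case split via Lemma~\ref{lem:JR} and the soundness of $X$ and $Y$ via Lemmas~\ref{lem:mpp} and~\ref{lem:ppm}—then fall out directly.
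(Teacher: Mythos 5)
Your proposal is correct and follows essentially the same route as the paper's own proof: soundness of $X\cup Y$ via Lemmas~\ref{lem:mpp} and~\ref{lem:ppm}, and completeness by fixing an occurrence $(p,x)$, using the invariance of misperiod sets under shifting the $Q$-occurrence within the run (your ``initial segment'' fact is exactly the paper's claim that $I=\Misp_{k+1}(S,i,i+q-1)$ and $I'=\Misp_{k+1}(S',i,i+q-1)$, justified by $\Misp_{k+1}(T,s,s+q-1)=\Misp_{k+1}(T,t,t+q-1)$ for $t\in[s,s+|R|-q]$), and then splitting on whether the local misperiod sets intersect—an aligned misperiod yields a matching pair caught by \textsc{Pairs-Match} ($p\in X$), while disjointness feeds Lemma~\ref{lem:JR} and \textsc{Periodic-Periodic-Match} ($p\in Y$). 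The only detail the paper makes explicit that you leave implicit is that, in the disjoint case, the occurrence windows must actually fit inside $U=\frag_J(T)$ and $V=\frag_{J'}(P^2)$ (i.e., $\min(J)<p$, $p+m-1<\max(J)$, and analogously for $J'$ and $x$) for \textsc{Periodic-Periodic-Match} to be able to report the position; the paper dispatches this in one line by noting that otherwise $|I|\ge k+1$ or $|I'|\ge k+1$.
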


\begin{proof}
  Both \textsc{Pair-Match} and \textsc{Periodic-Periodic-Match} problems return positions of $k$-occurrences of $P$ in $T$.
  Assume that $\rot_x(P)$, for $x \ge p_{\sample}+m_{\sample}$, has a $k$-mismatch occurrence in $T$ at position $p$ such that the designated fragment $\sample$ matches a fragment of $R$ exactly.
  Hence, it suffices to show that $p \in X \cup Y$.

  Let $J=\Misp_{k+1}(T,s,s+q-1)$ and $J'=\Misp_{k+1}(P^2,m+p_{\sample},m+p_{\sample}+q-1)$.
  We define $L_1$ and $L_2$ as the subsets of $J$ and $J'$, respectively, that are relevant for this $k$-occurrence, i.e.,
  \[L_1 = J \cap \W_p,\quad L_2 = J' \cap \W_x.\]
  Further let $L'_2 = L_2 \bmod m$.
  If any $i \in L_1$ and $j \in L'_2$ are a matching pair for this $k$-occurrence, then it will be found in the \textsc{Pairs-Match} problem, i.e.\ $p \in X$.
  Let us henceforth consider the opposite case.

	Let $S = T[p \dd p+m-1]$, $S'=\rot_x(P)$, and $i=m-x+p_{\sample}$ be the starting position of $Q=\sample[0 \dd q-1]$ in both strings.
  Further let $I=L_1 \ominus p$ and $I'=L_2 \ominus x$.
  We have that $I \cap I' = \emptyset$ by our assumption that misperiods do not align. We make the following claim.
  
  \begin{claim}
   $\Mis(S,S') = I \cup I'$.
  \end{claim}
  \begin{proof}
  Note that $I = \Misp_{k+1}(S, i, i+q-1)$ and $I' =\Misp_{k+1}(S', i, i+q-1)$. The latter equality follows from the fact that $\Misp_{k+1}(T, s, s+q-1)=\Misp_{k+1}(T, t, t+q-1)$ for any $t \in [s,s+|R|-q]$.
  We can thus directly apply Lemma~\ref{lem:JR} to strings $S$ and $S'$.
   \end{proof}

  In particular, $|I|+|I'| \le k$.
  Moreover,
  $\min(J) < p$ and $p+m-1 < \max(J)$
  as well as
  $\min(J') < x$ and $x+m-1 < \max(J')$, since otherwise we would have $|I| \ge k+1$ or $|I'| \ge k+1$.
  In conclusion, this $k$-occurrence will be found in the \textsc{Periodic-Periodic-Match} problem, i.e.\ $p \in Y$.
\end{proof}


The following proposition summarizes the results of this section.

\begin{proposition}\label{prop:nk5}
If $m \le n \le 2m$, $k$-CPM can be solved in $\Oh(n + k^5)$ time and $\cO(n)$ space.
\end{proposition}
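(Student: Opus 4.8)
The plan is to combine all the machinery developed in this section into a single algorithm that enumerates every sample, splits into the periodic and non-periodic cases, and assembles the compact outputs. First I would recall that by construction $P$ is split into $2k+3$ fragments, of which at least $k+2$ must match exactly in any $k$-occurrence, so it suffices to run \textsc{Sample-Match} for each of the $2k+3$ candidate samples $\sample$ and take the union; every $k$-occurrence is guaranteed to be captured because at least one sample matches exactly (and contains no split point). For each sample I would test whether it is periodic (checkable in $\Oh(m_\sample)$ time, which is $\Oh(m/k)$, so $\Oh(m)$ total over all samples after the $\Oh(n)$-time preprocessing). I also need to handle both the case $x \ge p_\sample + m_\sample$ and the symmetric case $x < p_\sample$; since the latter is analogous, it only doubles the work.

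For a non-periodic sample, \cref{lem:nonper} solves \textsc{Sample-Match} in $\Oh(k^2)$ time, producing $\Oh(k^2)$ intervals. Summing over $\Oh(k)$ samples gives $\Oh(k^3)$ time and $\Oh(k^3)$ intervals from the non-periodic case. For a periodic sample, by \cref{lem:P1_runs} there are $\Oh(k)$ $\sample$-runs in $T$, each computable in $\Oh(k)$ time, and for each run \cref{lem:per_time} runs algorithm Run-Sample-Matching in $\Oh(k^3)$ time, returning $\Oh(k^3)$ intervals and $\Oh(k^2)$ interval chains with difference $q=\per(\sample)$. Its correctness (that the output is a superset of the true \textsc{Sample-Match} solution restricted to that run, while reporting only genuine $k$-occurrences) is guaranteed by \cref{lem:per_corr}, which needs the assumption $n \le 2m$. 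Thus each periodic sample costs $\Oh(k) \cdot \Oh(k^3) = \Oh(k^4)$ time, and over all $\Oh(k)$ samples the periodic case costs $\Oh(k^5)$ time and produces $\Oh(k^5)$ intervals together with $\Oh(k^4)$ interval chains.

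The main obstacle—and the reason the bookkeeping is delicate—is assembling the compact output into a plain set of positions without blowing up the time bound. The intervals are easy: there are $\Oh(k^5)$ of them in total, and their union can be computed by the sweep used in \cref{prop:nk} in $\Oh(n + k^5)$ time. The interval chains are the subtle part. Here I would exploit the key design choice, flagged in the \textsc{Remark}, that at least $k+2$ of the $2k+3$ fragments match exactly: this forces that whenever a periodic sample yields an interval chain, the relevant rotation of $P$ is $k$-periodic with $k$-period $\per(\sample)$ (the ``moreover'' statement of \cref{lem:per_time}). Since a $k$-periodic string determines $\per(\sample)$ up to the periodicity structure and more than half the fragments agree, all the chains that are actually produced share a single common difference $q$, so they are not an arbitrary mixture of differences. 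Consequently I can collect all $\Oh(k^4)$ chains—each with that same difference $q$—and apply \cref{lem:union_chains} to compute their union as a subset of $[0\dd n]$ in $\Oh(n + k^4) = \Oh(n + k^5)$ time.

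Putting the pieces together, the total running time is $\Oh(n)$ for preprocessing plus $\Oh(k^5)$ for processing samples plus $\Oh(n + k^5)$ for union assembly, i.e.\ $\Oh(n + k^5)$ overall. Every output is a genuine $k$-occurrence (each subroutine reports only valid occurrences, and the periodic subroutines may over-report only by dropping the harmless $(J,J')$-disjointness side condition, which still yields true $k$-occurrences), and conversely every $k$-occurrence is captured because some exactly-matching fragment serves as $\sample$ and \cref{lem:per_corr,lem:nonper} guarantee completeness for that sample. For the $\cO(n)$ space bound I would note that the compact representations never need to be materialized simultaneously: the intervals and chains can be streamed into the two sweep/grid data structures of \cref{prop:nk} and \cref{lem:union_chains}, each of size $\Oh(n)$, so the working space stays $\cO(n)$. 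This establishes the claim.
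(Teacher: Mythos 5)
Your proposal tracks the paper's proof almost exactly: enumerate the $2k+3$ samples, use \cref{lem:nonper} for non-periodic samples, use \cref{lem:P1_runs} plus Run-Sample-Matching (\cref{lem:per_time,lem:per_corr}) for periodic ones, and merge the $\Oh(k^5)$ intervals and $\Oh(k^4)$ interval chains via \cref{lem:union_chains}, streaming the increments into the $\G_q$-shaped and $\G_1$-shaped grids to keep the space at $\cO(n)$. The one substantive weak point is your justification of the crucial claim that \emph{all} interval chains produced across different periodic samples share a single common difference $q$ --- without this, \cref{lem:union_chains} cannot be applied at all, since it requires one common difference. You argue that ``a $k$-periodic string determines $\per(\sample)$ up to the periodicity structure and more than half the fragments agree,'' but as stated this does not close the pigeonhole: for two chain-producing samples $\sample_1$ and $\sample_2$, the ``moreover'' part of \cref{lem:per_time} yields at least $k+2$ fragments having period $\per(\sample_1)$ and at least $k+2$ fragments having period $\per(\sample_2)$, so some fragment has \emph{both} periods; but a string can have many periods, so this alone does not force $\per(\sample_1)=\per(\sample_2)$. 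The paper closes this step with Fine and Wilf's lemma: each fragment $\sample'$ without misperiods has $|\sample'|\ge|\sample|-1\ge 2\per(\sample)-1$, so a second period $q'<\per(\sample)$ would give $\sample'$ the period $\gcd\bigl(q',\per(\sample)\bigr)$, which propagates back to a period of $\sample$ smaller than $\per(\sample)$, a contradiction. Hence $\per(\sample)$ is the \emph{smallest} period of each of these $k+2$ fragments; since the smallest period is a well-defined invariant of a fragment, the pigeonhole overlap between the two majorities does force the two differences to coincide. With this Fine--Wilf step inserted, your argument is complete and coincides with the paper's proof.
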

\begin{proof}
  We split the pattern into $2k+3$ fragments and choose a sample $\sample$ among them in every possible way.

  If the sample $\sample$ is not periodic, we use the algorithm of Lemma~\ref{lem:nonper} for \textsc{Sample-Match} in $\Oh(k^2)$ time
  (after $\Oh(n)$-time preprocessing).
  It returns a representation of $k$-occurrences as a union of $\Oh(k^2)$ intervals.

  If the sample $\sample$ is periodic, we need to find all $\sample$-runs in $T$.
  By Lemma~\ref{lem:P1_runs}, there are $\Oh(k)$ of them and they can all be computed in $\Oh(k)$ time
  (after $\Oh(n)$-time preprocessing).
  For every such $\sample$-run $R$, we apply the Run-Sample-Matching algorithm.
  Its correctness follows from Lemma~\ref{lem:per_corr}.
  By Lemma~\ref{lem:per_time}, it takes $\Oh(k^3)$ time and returns $\Oh(k^3)$ intervals and $\Oh(k^2)$ interval chains, each with difference $\per(\sample)$, of $k$-occurrences of $P$ in $T$
  (after $\Oh(n)$-time preprocessing).
  Over all $\sample$-runs, this takes $\Oh(k^4)$ time after the preprocessing and returns $\Oh(k^4)$ intervals and $\Oh(k^3)$ interval chains.
  
  By Lemma~\ref{lem:per_time}, if any interval chains are reported in Run-Sample-Matching, then some rotation of the pattern is $k$-periodic with a $k$-period $\per(\sample)$.
  Then, at least $k+2$ of the $2k+3$ pattern fragments do not contain misperiods and hence they must have a period $q=\per(\sample)$.
  This is actually their smallest period, for if one of these fragments $\sample'$ had a period $q' < q$, then
  $|\sample'| \ge |\sample|-1$ and, by Fine and Wilf's periodicity lemma~\cite{fine1965uniqueness}, 
  $\sample'$ would have a period $q''=\gcd(q,q')<q$, which would imply that $Q$ would also have a period $q''$ and hence $\sample$ as well.
  Thus, throughout the course of the algorithm, Run-Sample-Matching can only return interval chains of period $\per(\sample)$ by the pigeonhole principle.
  
  In total, \textsc{Sample-Match} takes $\Oh(k^4)$ time for a given sample (after preprocessing), $\Oh(n+k^5)$ time in total,
  and returns $\Oh(k^5)$ intervals and $\Oh(k^4)$ interval chains of $k$-occurrences, each with the same difference $q$.
  Let us note that an interval is a special case of an interval chain with difference, say, 1.
  We then apply Lemma~\ref{lem:union_chains} to compute the union of all chains of occurrences and the union of all intervals in $\Oh(n + k^5)$ total time.
  In the end we return the union of the two unions.

  In order to bound the space required by our algorithm by $\cO(n)$, we do not store each interval chain explicitly throughout the execution of the algorithm. Instead, for each interval chain we increment/decrement a constant number of cells in a ($\G_q$-shaped for interval chains or $\G_1$-shaped for intervals) 2D array of size $\cO(n)$ as in the proof of~\cref{lem:union_chains}, and compute the union of all such interval chains in the end.
\end{proof}

\section{An $\cO(n+\frac{n}{m}\,{k^4})$-time Algorithm}
Let us observe that for each non-periodic fragment $\sample$ we have to solve $\cO(k)$ instances of \textsc{Pair-Match}, while for each periodic fragment $\sample$ and each $\sample$-run, 
we obtain two sets $J$ and $J'$, each of cardinality $\cO(k)$, where each pair of elements in $J \times J'$ requires us to solve an instance of \textsc{Pair-Match}.
Further recall that each instance of \textsc{Pair-Match} reduces to two calls to our $\cO(k)$-time algorithm for \textsc{Anchor-Match}.
We thus consider $\cO(k^4)$ \textsc{Anchor-Match} instances in total, yielding a total time complexity of $\cO(k^5)$. 
As can be seen in the proof of~\cref{prop:nk5} and the pseudocode, this is the bottleneck of our algorithm, with everything else requiring $\cO(n+k^4)$ time.
We will decrease the number of calls to \textsc{Pair-Match} by using a marking trick.

%
We first present a simple application of the marking trick.
Suppose that we are in the standard $k$-mismatch problem, where we are to find all $k$-occurrences (not circular ones) of a pattern $P$ of length $m$ in a text $T$ of length $n$, and $n \leq 2m$.
Further suppose that $P$ is square-free, or, in other words, that it is nowhere periodic.
Let us consider the following algorithm. We split the pattern into $k+1$ fragments of length roughly $m/k$ each. 
Then, at each $k$-occurrence of $P$ in $T$, at least one of the $k+1$ fragments must match exactly. 
We then find the $\cO(k)$ such exact matches of each fragment in $T$ and each of them nominates a position for a possible $k$-occurrence of $P$. 
We thus have $\cO(k^2)$ candidate positions in total to verify.

Now consider the following refinement of this algorithm. We split the pattern into $2k$ fragments (instead of $k+1$), each of length roughly $m/(2k)$.
Then, at each $k$-occurrence of $P$ in $T$, at least $k$ of the fragments must match exactly; we exploit this fact as follows. 
Each exact occurrence of a fragment in $T$ gives a mark to the corresponding position for a $k$-occurrence of $P$.
There are thus $\cO(k^2)$ marks given in total. However, we only need to verify positions with at least $k$ marks and these are now $\cO(k)$ in total.
An illustration is provided in~\cref{fig:marking}.

\begin{figure}[ht]
  \centering
  \begin{tikzpicture}[scale=0.5]



\draw[thick] (3,-0.4) rectangle (11,0.4);
\draw (3,0) node[left=0.1cm] {$P$};

\foreach \x in {1,2,3,4}{
    \draw[xshift=1.6*\x cm] (3,-0.4) -- (3,0.4);
  }
  
\fill[pattern=north east lines] (7.9,-0.4) rectangle (9.4,0.4);

\draw[densely dotted] (3,-0.4) -- (3,-1.6);

	
\begin{scope}[yshift=-2cm]
   \draw[thick] (0,-0.4) rectangle (12,0.4);
   \draw (0,0) node[left=0.1cm] {$T$};
  
\draw[pattern=north east lines] (7.9,-0.4) rectangle (9.4,0.4);
\end{scope}


\begin{scope}[xshift=15cm]


\draw[thick] (2,-0.4) rectangle (10,0.4);
\draw (2,0) node[left=0.1cm] {$P$};

\foreach \x in {1,2,3,4,5,6,7}{
    \draw[xshift=\x cm] (2,-0.4) -- (2,0.4);
  }
  
\foreach \x in {1,2,4,6}{
	\fill[xshift=\x cm, pattern=north east lines] (2,-0.4) rectangle (3,0.4);
	\draw[xshift=\x cm, pattern=north east lines] (2,-1.6) rectangle (3,-2.4);
}

\draw[densely dotted] (2,-0.4) -- (2,-1.6);

	
\begin{scope}[yshift=-2cm]
   \draw[thick] (0,-0.4) rectangle (12,0.4);
   \draw (0,0) node[left=0.1cm] {$T$};
  
\end{scope}
\end{scope}

\end{tikzpicture}
  \caption{We consider $k=4$. To the left: a candidate starting position for $P$ given by an exact match of one of the $5$ fragments. To the right: a candidate starting position for $P$ given by exact matches of $4$ of the $8$ fragments.}\label{fig:marking}
\end{figure}
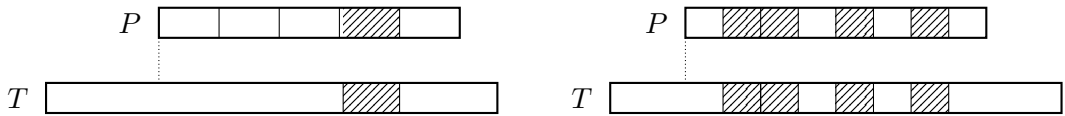

Let us get back to the $k$-CPM problem.
Recall that each $k$-occurrence implies that at least $k+2$ fragments match exactly.
We run the algorithm yielding \cref{prop:nk5} with a single difference.  
Instead of processing each instance of \textsc{Pair-Match} separately, we apply the marking trick in order to decrease the exponent of $k$ by one.
This is achieved by a reduction in the number of calls to the algorithm that solves \textsc{Anchor-Match}.
For each of the $\cO(k^4)$ instances of \textsc{Pair-Match} we mark the two possible anchors for a $k$-occurrence and note that only anchors with at least $k+2$ marks need to be verified; these are $\cO(k^3)$ in total. 
Finally, for each such anchor we apply our solution to the \textsc{Anchor-Match} problem, which requires $\cO(k)$ time, hence obtaining an $\cO(n+k^4)$-time algorithm.

The correctness of this approach follows from the following fact. 
(Note that \textsc{Periodic-Periodic-Match} is not affected by this modification.)
If there is a $k$-occurrence at position $i$ of $T$ that has not been returned by any of the calls to \textsc{Periodic-Periodic-Match}, 
then, the algorithm would return $i$ through at least $k+2$ calls to \textsc{Anchor-Match}. We arrive at the following result.

\begin{proposition}\label{prop:nk4}
If $m \le n \le 2m$, $k$-CPM can be solved in $\Oh(n + k^4)$ time and $\cO(n)$ space.
\end{proposition}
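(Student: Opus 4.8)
The plan is to rerun the algorithm of Proposition~\ref{prop:nk5} verbatim, changing only how the $\Oh(k^4)$ instances of \textsc{Pair-Match} are consumed. Recall from Lemma~\ref{lem:mpp} that each such instance, arising from a matching pair $(i,j)$, nominates at most two anchors, namely $i-j$ and $i+|P|-j$. Instead of immediately invoking \textsc{Anchor-Match} on each nominated anchor---which costs $\Oh(k)$ per call and hence $\Oh(k^5)$ overall---I would merely \emph{mark} each nominated anchor, i.e.\ increment a counter kept in a length-$n$ array indexed by positions of $T$. Producing the two candidates for a pair takes $\Oh(1)$ time, so all $\Oh(k^4)$ instances can be processed in $\Oh(k^4)$ time and $\Oh(n)$ space, depositing $\Oh(k^4)$ marks in total. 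I would then scan the counter array and invoke \textsc{Anchor-Match} only on those anchors that received at least $k+2$ marks. Since the marks total $\Oh(k^4)$ and each surviving anchor carries at least $k+2$ of them, at most $\Oh(k^3)$ anchors survive, so these \textsc{Anchor-Match} calls cost $\Oh(k^3)\cdot\Oh(k)=\Oh(k^4)$ in total.

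For correctness I would leave the \textsc{Periodic-Periodic-Match} outputs untouched: they still contribute the $\Oh(k^4)$ interval chains of Proposition~\ref{prop:nk5}, all sharing a common difference $q$ by the pigeonhole argument established there. The only thing left to check is that every $k$-occurrence survives the marking threshold unless it is already reported by some \textsc{Periodic-Periodic-Match}. So fix a $k$-occurrence at position $p$ with anchor $a$ and assume it is reported by no \textsc{Periodic-Periodic-Match}. By the choice of the $2k+3$ fragments, at least $k+2$ of them match $T$ exactly within this occurrence. Taking each such fragment as the sample $\sample$, the occurrence $p$ must appear in the output of \textsc{Sample-Match}$(\sample)$: when $\sample$ is non-periodic this happens through \textsc{Pair-Match} (Lemma~\ref{lem:nonper}), while when $\sample$ is periodic Lemma~\ref{lem:per_corr} gives $p\in X\cup Y$, where $Y$ is the \textsc{Periodic-Periodic-Match} output; since $p\notin Y$ by assumption, we get $p\in X$, i.e.\ $p$ is produced by \textsc{Pairs-Match}. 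In either case a \textsc{Pair-Match} instance nominates the correct anchor $a$, so $a$ accumulates at least $k+2$ marks and is verified. Hence the modified algorithm reports exactly the same set of $k$-occurrences as Proposition~\ref{prop:nk5}.

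It remains to assemble the output within the budget. The \textsc{Anchor-Match} calls on the surviving anchors return $\Oh(k^4)$ intervals, which together with the $\Oh(k^4)$ interval chains from \textsc{Periodic-Periodic-Match}---all of common difference $q$---I would union via Lemma~\ref{lem:union_chains} in $\Oh(n+k^4)$ time. As in Proposition~\ref{prop:nk5}, I would refrain from storing the intervals and chains explicitly, folding each of them on the fly into the increment/decrement 2D arrays of size $\Oh(n)$ (shape $\G_1$ for intervals, shape $\G_q$ for chains); the mark counters share this same $\Oh(n)$ budget. Summing the costs---$\Oh(n)$ preprocessing, $\Oh(k^4)$ for marking and thresholding, $\Oh(k^4)$ for the surviving \textsc{Anchor-Match} calls, $\Oh(k^4)$ for \textsc{Periodic-Periodic-Match}, and $\Oh(n+k^4)$ for the final union---yields the claimed $\Oh(n+k^4)$ time in $\Oh(n)$ space.

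The step I expect to be the main obstacle is the correctness dichotomy, and specifically the bookkeeping that guarantees each of the $k+2$ exactly-matching fragments contributes a distinct mark to the single correct anchor. The delicate point is that for a periodic sample the mark is produced not by an exact occurrence of the fragment itself but by an \emph{aligned misperiod pair} inside \textsc{Pairs-Match}; one must invoke Lemma~\ref{lem:per_corr} to argue that, precisely when the occurrence escapes \textsc{Periodic-Periodic-Match}, such an aligned pair must exist, forcing $p$ into $X$. Making this ``chain-reported versus marked at least $k+2$ times'' alternative airtight---while ensuring no occurrence is missed at the interface between the periodic and non-periodic fragments---is where the care is needed.
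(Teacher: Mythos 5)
Your proposal is correct and follows essentially the same route as the paper: you apply the marking trick to the $\Oh(k^4)$ \textsc{Pair-Match} instances, nominate the two candidate anchors per instance, verify only the $\Oh(k^3)$ anchors receiving at least $k+2$ marks via $\Oh(k)$-time \textsc{Anchor-Match} calls, and leave \textsc{Periodic-Periodic-Match} untouched, with correctness resting on the dichotomy that any occurrence missed by \textsc{Periodic-Periodic-Match} is nominated by at least $k+2$ distinct \textsc{Pair-Match} instances. Your write-up in fact spells out the correctness argument (via Lemma~\ref{lem:per_corr} and the per-fragment case analysis) in more detail than the paper does.
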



Both Propositions~\ref{prop:nk} and~\ref{prop:nk4} use $\cO(n)$ space. Moreover, Proposition~\ref{prop:nk4} assumes that $n \le 2m$.
In order to solve the general version of the $k$-CPM problem, where $n$ is arbitrarily large, efficiently and using $\cO(m)$ space,
we use the so-called standard trick: we split the text into $\cO(n/m)$ fragments, each of length $2m$
(perhaps apart from the last one), starting at positions equal to $0 \bmod m$.

We need, however, to ensure that the data structures for answering $\lcp$, $\lcsuf$, and other internal queries over each 
such fragment of the text can be constructed in $\cO(m)$ time when the input alphabet $\Sigma$ is large.
As a preprocessing step we hash the letters of the pattern using perfect hashing.
For each key, we assign a unique identifier from $\{1,\ldots,m\}$. 
This takes $\cO(m)$ (expected) time and space~\cite{Fredman:1984:SST:828.1884}. When reading a fragment $F$ of length (at most) $2m$ of the text we 
look up its letters using the hash table. If a letter is in the hash table we replace it in $F$ by its rank value; otherwise we replace it by rank $m+1$. 
We can now construct the data structures in $\cO(m)$ time and thus our algorithms can be implemented in $\cO(m)$ space. 

If $\Sigma=\{1,\ldots,n^{\cO(1)}\}$, the same bounds can be achieved deterministically.
Specifically, we consider two cases. If $m > \sqrt{n}$ we sort the letters of every text fragment and of the pattern in $\cO(m)$ time per fragment because $n$ is polynomial in $m$ and $|\Sigma|$ is polynomial in $n$. Then we can merge the two sorted lists and replace the letters in the pattern and the text fragments by their ranks.
Otherwise ($m \leq \sqrt{n}$), we construct a deterministic dictionary for the letters of the pattern in $\cO(m\log^2\log m)$ time~\cite{DBLP:conf/icalp/Ruzic08}. The dictionary uses $\cO(m)$ space and answers queries in constant time; we use it instead of perfect hashing in the previous solution.

We combine Propositions~\ref{prop:nk} and~\ref{prop:nk4} with the above discussion to get our final result.

\begin{theorem}
Circular Pattern Matching with $k$ Mismatches can be solved in $\cO(\min(nk,\; n+\frac{n}{m}\,{\small k^4}))$ time and $\cO(m)$ space.
\end{theorem}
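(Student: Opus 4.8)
The plan is to assemble the final theorem directly from the two propositions that have already been established, namely \cref{prop:nk} for the $\cO(nk)$ bound and \cref{prop:nk4} for the $\cO(n+\frac{n}{m}k^4)$ bound, and then to upgrade the space usage from $\cO(n)$ to $\cO(m)$ by the standard windowing argument sketched in the discussion preceding the statement. Since we want the running time $\cO(\min(nk,\,n+\frac{n}{m}k^4))$, I would first note that the minimum is achieved simply by running whichever of the two algorithms is asymptotically faster for the given parameters; one may even compare $nk$ against $n+\frac{n}{m}k^4$ in $\cO(1)$ time and dispatch accordingly. The genuine work is therefore not in the time bound per se but in realizing both algorithms simultaneously in $\cO(m)$ space for arbitrary $n$.

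The key step is the text-splitting (``standard trick''): partition $T$ into $\cO(n/m)$ overlapping windows of length $2m$, each starting at a position divisible by $m$ (the last one possibly shorter). Every length-$m$ fragment of $T$ is contained in at least one such window, so every $k$-occurrence is detected in some window, and no occurrence is missed. Within each window, \cref{prop:nk4} applies directly since each window satisfies $n' \le 2m$, giving $\cO(m + k^4)$ time per window and hence $\cO(n + \frac{n}{m}k^4)$ in total; the $\cO(nk)$ branch is handled analogously (or simply by running \cref{prop:nk} on each window, noting $n'k = \cO(mk)$ per window summing to $\cO(nk)$). Because we process one window at a time and discard its data structures before moving on, the working space is dominated by the cost of the data structures over a single length-$2m$ fragment, which is $\cO(m)$.

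The main obstacle, and the step requiring the most care, is ensuring that the $\lcp$, $\lcsuf$, and IPM data structures over each window can actually be built in $\cO(m)$ time and space even when the alphabet $\Sigma$ is large, since the suffix-array and range-minimum machinery assumes a linearly sortable alphabet. The plan here is exactly the alphabet-remapping argument in the preamble: preprocess the pattern once with perfect hashing to assign each distinct pattern letter an identifier in $\{1,\dots,m\}$ in $\cO(m)$ expected time and space~\cite{Fredman:1984:SST:828.1884}, then for each window replace each letter by its identifier (or by the sentinel rank $m+1$ if absent from $P$, which safely forces a mismatch). This yields an effective alphabet of size $\cO(m)$ per window, so all the internal data structures over $P\#F$ build in $\cO(m)$ time. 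For polynomially bounded integer alphabets $\Sigma=\{1,\dots,n^{\cO(1)}\}$ the same is achieved deterministically: when $m>\sqrt{n}$ we sort and merge the letters of $P$ and each window directly in $\cO(m)$ per window, and when $m\le\sqrt n$ we substitute a deterministic dictionary built in $\cO(m\log^2\log m)$ time~\cite{DBLP:conf/icalp/Ruzic08} for the perfect hash table.

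Putting these pieces together, I would conclude by invoking \cref{prop:nk} and \cref{prop:nk4} window-by-window under the remapped alphabet, taking the faster of the two branches, and observing that the per-window time summed over the $\cO(n/m)$ windows gives the claimed $\cO(\min(nk,\,n+\frac{n}{m}k^4))$ total time while the peak space remains $\cO(m)$. The one detail worth double-checking is boundary handling at window overlaps so that occurrences straddling a window boundary are reported exactly once; this is routine given the overlap of length $m$, since every occurrence lies fully inside a unique earliest window and duplicates from later windows can be suppressed by a simple range check.
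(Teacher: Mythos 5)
Your proposal is correct and follows essentially the same route as the paper: it combines \cref{prop:nk} and \cref{prop:nk4} via the standard trick of splitting $T$ into $\cO(n/m)$ windows of length $2m$ starting at positions divisible by $m$, and handles large alphabets by remapping letters through perfect hashing (or, for polynomial integer alphabets, sorting/merging when $m>\sqrt{n}$ and a deterministic dictionary otherwise), exactly as in the paper's concluding discussion. The only additions you make --- dispatching on whichever bound is smaller and deduplicating occurrences that straddle window overlaps --- are routine details left implicit in the paper.
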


Our algorithms output all positions in the text where some rotation of the pattern occurs with $k$ mismatches. It is not difficult to extend the algorithms to output, for each of these positions, a corresponding rotation of the pattern.

\bibliographystyle{elsarticle-num}
\bibliography{k-mismatches}

\end{document}